\numberwithin{equation}{section}
\DeclareMathOperator{\sign}{sgn}  % Sign of a real number/function
\newcommand{\dimM}{D}
\newcommand\dint[1]{#1\lrcorner}
\newcommand{\dex}{\mathrm{d}}
\newcommand{\Dex}{\boldsymbol{\mathrm{D}}}
\newcommand{\dform}[1]{\boldsymbol{#1}}        % Bold differential form
\newcommand{\dfal}{\dform{\alpha}}
\newcommand{\dfbe}{\dform{\beta}}
\newcommand{\dfga}{\dform{\gamma}}
\newcommand{\dfk}{\dform{k}}
\newcommand{\dfl}{\dform{l}}
\newcommand{\dfA}{\dform{A}}
\newcommand{\dfB}{\dform{B}}
\newcommand{\dfF}{\dform{F}}
\newcommand{\dfL}{\dform{L}}
\newcommand{\dfQ}{\dform{Q}}
\newcommand{\dfR}{\dform{R}}
\newcommand{\dfT}{\dform{T}}
\newcommand{\dfW}{\dform{W}}
\newcommand{\dfZ}{\dform{Z}}
\newcommand{\cofr}{\dform{\vartheta}}
\newcommand{\spatcofr}{\tilde{\cofr}}
\newcommand{\dfom}{\dform{\omega}}
\newcommand{\volf}{\dform{\omega}_\mathrm{vol}}
\newcommand{\teng}{\boldsymbol{g}}
\newcommand{\vpartial}{\boldsymbol{\partial}}
\newcommand{\vfre}{\boldsymbol{e}}
\newcommand{\LCtensor}{\mathcal{E}}
\newcommand{\kA}{\bar{A}}
\newcommand{\sA}{\tilde{A}}
\newcommand{\kB}{\bar{B}}
\newcommand{\sB}{\tilde{B}}
\newcommand{\kC}{\bar{C}}
\newcommand{\sC}{\tilde{C}}
\newcommand{\kP}{\bar{P}}
\newcommand{\sP}{\tilde{P}}
\newcommand{\calF}{\mathcal{F}}
\newcommand{\VdHC}{\tilde{V}}
\theoremstyle{plain}
      \newtheorem{thm}{\protect\theoremname}   % thm is the name and the counter
      \newtheorem{thm}{\protect\theoremname}[chapter]
    \newtheorem{prop}[thm]{Proposition}
    \newtheorem{corollary}[thm]{Corollary}
    \newtheorem{theorem}[thm]{Theorem}
\theoremstyle{definition}
    \newtheorem{defn}[thm]{Definition}
\begin{document}

\hfill{}\today

\vspace{.5cm}

%%%%%%%%%%%%%%%%%
\begin{center}
  {\Large \bf New metric-affine generalizations of \\  gravitational wave geometries}

  \vspace{.5cm}

    {\bf Alejandro Jim\'enez-Cano}
    \vspace{.2cm}

    {\it Departamento de F\'isica Te\'orica y del Cosmos and CAFPE \\
         Universidad de Granada, 18071, Granada, Spain
    } \\
    e-mail: {\tt alejandrojc@ugr.es}
\end{center}

\vspace{.2cm}

%%%%%%%%%%%%%%%%%
\begin{center} \bf ABSTRACT \end{center}
\vspace{.2cm}

\noindent

In this paper we explore generalizations of metric structures of the gravitational wave type to geometries containing an independent connection. The aim is simply to establish a new category of connections compatible, according to some criteria, to the known metric structures for gravitational waves and, additionally, provide some properties that can be useful for the search of solutions of this kind in different theories. 

%%%%%%%%%%%%%%%%%%%%%%%%%%%%%%%%%%%%%%%%%%%%%%%%%%%%%%%%%
\section{Introduction}

Metric-affine gravity is a natural extension of the geometry typically used in gravitational theories. In this framework the set of geometrical structures that participate in the gravitational physics contains, apart from the metric, a linear connection. This setting arises naturally when trying to formulate a gauge theory of certain groups of spacetime symmetries, such as the Poincar\'e or the Affine group. These two cases for example give rise to the so called Poincar\'e Gauge gravity (PG) and Metric Affine Gauge gravity (MAG) \cite{Blagojevic2001, Hehl1995}, respectively. In this formulation other properties of the matter fields, apart from the energy momentum tensor, are coupled to the geometrical structure. The spin density, dilation and shear currents enter the game as new fundamental properties of the matter related to the dynamics of the connection \cite{Hehl1976a,Hehl1976c,ObukhovTresguerres1993}. This formulation and its viability from the quantum gravity point of view has been getting some attention recently \cite{PercacciSezgin2019, Percacci2020}.

It is worth remarking that apart from MAG and PG gravity, many theories have been formulated considering an additional connection with certain properties or restrictions. See for example Ricci-Based Gravity theories containing the well-known $f(R)$ theories  \cite{AfonsoOlmoRubiera2018, AfonsoOlmoOraziRubiera2019, OlmoRubieraWojnar2019, Olmo2011, Koivisto2010, Pani2012, AfonsoOlmoOraziRubiera2018, BeltranDelhom2019,BeltranDelhom2020, BeltranHeisenbergOlmoRubiera2018, DelhomOlmoOrazi2019}, or the teleparallel equivalents and their generalizations \cite{BeltranHeisenbergKoivisto2018, AldrovandiPereira2012, BeltranDialektopoulos2020, KoivistoTsimperis2018, KrssakHoogenPereira2018,BeltranJimenezCano2019, BeltranHeisenbergKoivisto2019, HohmannJarvKrssakPfeifer2019}.

In purely metric (also called Riemannian) geometry, there is not a general covariant definition of what does it mean for a metric to represent a spacetime with gravitational radiation, for example in terms of certain property of its curvature. There are different criteria and conditions, as well as known geometries such as Kundt spacetimes, where we can find e.g. hypersurfaces playing the role of wave fronts \cite{Kundt1961, KundtTrumper2016}. Many of those criteria are collected in \cite{Zakharov1973} (see also \cite{Puetzfeld2000_Diploma} for a summary of some of them). Moreover, these criteria are constructed in the context of General Relativity, so when going to a generalization we have to ensure that the differential equations satisfied by the metric continue being compatible with the criteria. Regarding this aspect, we will not enter into details in this paper.

Obviously, when working with an independent linear connection, these definitions do not extrapolate due to the different nature of both geometrical structures. However, the already mentioned criteria for gravitational wave metrics are essentially conditions on the Riemann tensor associated to it or, to be more precise, to its Levi-Civita connection. Therefore they can be seen as a natural window to explore generalizations of these criteria, simply by considering their application to curvature tensors that come from other connections different from the Levi-Civita one. Indeed, if we consider a broader notion of curvature within a gauge theoretical context, the torsion, which appears as the fieldstrength of the translational part, can be regarded as a curvature as well and, hence, it can be subjected to these conditions. This way of extending the metric criteria is the idea we are going to explore in more detail throughout these pages, focusing on one particular criterion with an interesting physical meaning.

Finally, we would like to mention that exact gravitational wave solutions have been already explored in geometries including a non-Riemannian connection. See for example \cite{Obukhov2017, BlagojevicCvetkovic2017} for solutions in the context of PG, and also \cite{Obukhov2006, PasicVassiliev2005, Garcia2000, Vassiliev2005, Vassiliev2002} for solutions in metric-affine theories. In our study, we will also try to generalize the Ansatzes used by Obukhov in \cite{Obukhov2017, Obukhov2006} but, as we previously mentioned, respecting some criteria that can be obtained from the Riemannian ones by making a reasonable generalization.

We will start in Section \ref{sec: geom} with an overview about metric-affine geometry, and continue with a extensive compilation of results and criteria on gravitational wave geometries in metric gravity in Section \ref{sec: GW Rieman}. Then in Section \ref{sec: GW non-Rieman} we discuss how to extend the criteria of the previous section to the metric-affine framework. In Section \ref{sec: gen setting} we define a particular metric-affine geometry and see the conditions it needs to satisfy for the criteria to be fulfilled. A subcase of this geometry is considered in Section \ref{sec: subcase} and a simpler case of this one in Section \ref{sec: subsubcase}, where we show several useful properties. Finally, in Section \ref{sec: conclusion}, we provide a final discussion and some relevant remarks. At the end, our symbols are collected in Appendix \ref{app: symbols} and some other expressions in Appendix \ref{app: useful expr}.

\section{Review of fundamental objects in metric-affine geometry} \label{sec: geom}

Let $M$ be a $\dimM$-dimensional smooth manifold and $\vpartial_\mu\coloneqq \frac{\partial}{\partial x^\mu}$ a coordinate frame on it. The basic idea of metric-affine geometry is the inclusion of a new fundamental object in $M$, apart from the metric $\teng$, a \emph{linear connection} $\Gamma_{\mu\nu}{}^\rho$, which represents a notion of parallelism and allows to introduce a covariant derivative of tensors,
\begin{equation}
\nabla_\mu H^{\nu...}{}_{\rho...} \coloneqq \partial_\mu H^{\nu...}{}_{\rho...} 
+ \Gamma_{\mu\sigma}{}^\nu H^{\sigma...}{}_{\rho...} + \ldots
- \Gamma_{\mu\rho}{}^\sigma H^{\nu...}{}_{\sigma...} -\ldots
\end{equation}
and also defines a \emph{curvature tensor} and a \emph{torsion tensor} on $M$, respectively
\begin{align}
R_{\mu\nu\rho}{}^\lambda &\coloneqq \partial_\mu \Gamma_{\nu\rho}{}^\lambda -\partial_\nu \Gamma_{\mu\rho}{}^\lambda
+ \Gamma_{\mu\sigma}{}^\lambda \Gamma_{\nu\rho}{}^\sigma   - \Gamma_{\nu\sigma}{}^\lambda \Gamma_{\mu\rho}{}^\sigma\,,\\
T_{\mu\nu}{}^\rho &\coloneqq 2\Gamma_{[\mu\nu]}{}^\rho \,.
\end{align}

Once the metric is specified, an arbitrary linear connection is determined by its torsion and the so called \emph{non-metricity} tensor,
\begin{equation}
\qquad Q_{\mu\nu\rho} \coloneqq -\nabla_\mu g_{\nu\rho} \,,
\end{equation}
while for $T_{\mu\nu}{}^\rho=0$ and $Q_{\mu\nu\rho}=0$ we recover $\mathring{\Gamma}_{\mu\nu}{}^\rho$, the Levi-Civita connection of $\teng$.

In the context of gauge gravity or when we couple the geometry to matter fields (living in vector spaces under certain representations of the gauge group) it is specially useful to work in an arbitrary frame of the tangent bundle. Consider then a smooth distribution of basis in the tangent space of each point and its corresponding dual basis of 1-forms or \emph{coframe}\footnote{We will extensively use the vielbeins $ e^\mu{}_a$ and $e_\mu{}^a$ to change indices from one basis to the other. Examples:
\[
R_{\mu\nu a}{}^b =e^\rho{}_a e_\lambda{}^b R_{\mu\nu\rho}{}^\lambda,\quad 
T_{\mu\nu}{}^a = e_\lambda{}^a T_{\mu\nu}{}^\lambda,\quad 
Q_{\mu ab} =e^\nu{}_a e^\rho{}_b Q_{\mu\nu\rho},\quad 
\nabla_a=e^\mu{}_a\nabla_\mu, \quad \partial_a=e^\mu{}_a\partial_\mu\,.
\]
}
\begin{equation}
\vfre_a = e^\mu{}_a \vpartial_\mu\,\qquad \cofr^a = e_\mu{}^a \dex x^\mu\,,
\end{equation}
fulfilling  $e_\mu{}^ae^\mu{}_b=\delta_b^a$ (and $e^\mu{}_a e_\nu{}^a=\delta_\nu^\mu$). It is important to remark that this frame is in general anholonomic (i.e. no coordinate functions are associated) since the vectors of the basis have non-trivial Lie bracket with each other. This is reflected in the non-vanishing \emph{anholonomy coefficients},
\begin{equation}
\Omega_{ab}{}^c \coloneqq 2e^\mu{}_a e^\nu{}_b \partial_{[\mu}e^c{}_{\nu]}\quad  \Rightarrow \quad [\vfre_a,\,\vfre_b] = -\Omega_{ab}{}^c\vfre_c  \,.
\end{equation}
Actually, in terms of the dual basis these are indeed the coefficients of the exterior derivative of the coframe,
\begin{equation}
\dex\cofr^c = \tfrac{1}{2}\Omega_{ab}{}^c \cofr^a\wedge\cofr^b \,.
\end{equation}

The information regarding the linear connection is encoded in an object called the \emph{connection 1-form}, $\dfom_a{}^b = \omega_{\mu a}{}^b \dex x^\mu$, whose components are given by
\begin{equation}
\omega_{\mu a}{}^{b} \ = \  e^{\nu}{}_{a}\, e^{b}{}_{\lambda}\, \Gamma_{\mu\nu}{}^{\lambda} \ + \ e^{b}{}_{\sigma}\,\partial_{\mu} e^{\sigma}{}_{a}\,.
\end{equation}
reflecting the fact that the connection does not transform tensorially under basis transformations. From now on the operator $\nabla_\mu$ will act on both kind of indices, Greek and Latin, in one case with $\Gamma_{\mu\nu}{}^\rho$ and in the other with $\omega_{\mu a}{}^b$.

However when working with differential forms, the exterior derivative $\dex$ can also be covariantly extended as the \emph{exterior covariant derivative}. If we consider an arbitrary tensor-valued $p$-form
\begin{equation}
\dfal_{a...}{}^{b...} \ =\ \tfrac{1}{p!}\, \alpha_{\mu_1 ... \mu_p\,a...}{}^{b...} \, \dex x^{\mu_1} \wedge \ldots\wedge \dex x^{\mu_p}\,. \label{eq: arbitrary form}
\end{equation}
its exterior covariant derivative is defined as
\begin{equation}
    \Dex\dfal_{a...}{}^{b...}\ =\ \dex\dfal_{a...}{}^{b...} \ + \ \dfom_{c}{}^{b}\wedge\dfal_{a...}{}^{c...}\ + \ldots -\ \dfom_{a}{}^{c}\wedge\dfal_{c...}{}^{b...}\ -  \ldots\,,
    \label{extcovder}
  \end{equation} 
which for zero forms gives simply $\Dex\alpha_{a...}{}^{b...}=\nabla_\mu\alpha_{a...}{}^{b...} \dex x^\mu$. 
The curvature and torsion 2-forms and the non-metricity 1-form are then defined as 
  \begin{align}
    \dfR_{a}{}^{b} \, &  \coloneqq \ \dex\dfom_{a}{}^{b}\ + \ \dfom_{c}{}^{b}\wedge\dfom_{a}{}^{c}\,,\\
    \dfT^a      \,  &  \coloneqq \Dex\cofr^a  \,,\\ 
    \dfQ_{ab}      \, &  \coloneqq\ -\Dex g_{ab}\,.
  \end{align}
whose components (according to \eqref{eq: arbitrary form}) are $R_{\mu\nu a}{}^b$, $T_{\mu\nu}{}^a$ and $Q_{\mu ab}$, respectively. This forms (or, equivalently, the corresponding tensors) can be decomposed according to the irreducible representations of $\mathrm{GL}(\dimM,\mathbb{R})$. For the torsion we have the trace part $\dfT^{(\mathrm{tr})}$, the totally antisymmetric part $\dfT^{(\mathrm{a})}$ and the rest $\dfT^{(\mathrm{tn})}$; for the non-metricity we have both traces $\dfQ^{(\mathrm{tr}1)}$ and $\dfQ^{(\mathrm{tr}2)}$, the traceless totally symmetric part $\dfQ^{(\mathrm{s})}$ and the rest $\dfQ^{(\mathrm{tn})}$; and finally, the curvature, in the presence of a metric, can be first separated into symmetric and antisymmetric parts,
\begin{equation}
\dfR_{ab} =  \dfZ_{ab} +\dfW_{ab}\,,
\end{equation}
where $\dfZ_{ab}\coloneqq\dfR_{(ab)}$ and $\dfW_{ab}\coloneqq\dfR_{[ab]}$. Then, it can be shown that $\dfW_{ab}$ splits into six irreducible parts $\dfW^{(A)}{}_{ab}$ ($A=1,...,6$) and $\dfZ_{ab}$ into five $\dfZ^{(A)}{}_{ab}$ ($A=1,...,5$). For the detailed expressions of all of these irreducible components in arbitrary dimensions, see \cite{Hehl1995, McCrea1992}.\footnote{The dictionary from our notation to the one they use is:
\[
\dfT^{(\mathrm{tr})}\rightarrow{}^{(2)}T\,,\qquad \dfT^{(\mathrm{tn})}\rightarrow{}^{(1)}T\,,\qquad \dfT^{(\mathrm{a})}\rightarrow{}^{(3)}T\,,\qquad \dfZ^{(i)}\rightarrow{}^{(i)}Z\,,\qquad \dfW^{(i)}\rightarrow{}^{(i)}W\,,
\]
\[
\dfQ^{(\mathrm{tr}1)}\rightarrow{}^{(4)}Q\,,\qquad \dfQ^{(\mathrm{tr}2)}\rightarrow{}^{(3)}Q\,,\qquad \dfQ^{(\mathrm{tn})}\rightarrow{}^{(2)}Q\,,\qquad \dfQ^{(\mathrm{s})}\rightarrow{}^{(1)}Q\,.
\]}

\section{Gravitational waves in Riemannian geometry} \label{sec: GW Rieman}

\subsection{Transversal space}
Given a lightlike vector field $k^{\mu}$, in order to define the transverse space we need to introduce another lightlike vector $l^{\mu}l_{\mu}=0$, such that $k^{\mu}l_{\mu}\neq0$. Since the normalization for a null vector is arbitrary, let us consider without loss of generality that the field $l^{\mu}$ verifies
\begin{equation}
k^{\mu}l_{\mu}=1\,.
\end{equation}

\begin{defn}
\textbf{(Transversal)}. Given a lightlike congruence with velocity $k^{\mu}$ and another non-colinear lightlike vector $l^{\mu}$ that satisfy $l_{\mu}k^{\mu}=1$, the orthogonal 
\begin{equation}
\left(\mathrm{span}_{\mathbb{R}}\{k^{\mu}\vpartial_{\mu},\,l^{\mu}\vpartial_{\mu}\}\right)^{\bot}\,,
\end{equation}
is called \emph{transversal space} of the congruence (with respect
to $l^{\mu}$). We will say that a tensor $H^{\mu_{1}...\mu_{r}}{}_{\nu_{1}...\nu_{s}}$
is \emph{transversal} if the contraction of any of its indices with
$l^{\mu}$ and $k^{\mu}$ vanishes.
\end{defn}

At this point it is useful to introduce the projector onto the transverse spatial slices (see a more detailed explanation in \cite{Blau2011}),
\begin{equation}
h^\mu{}_\nu \coloneqq \delta^\mu_\nu-k^\mu l_\nu-l^\mu k_\nu\,,
\end{equation}
that allows us to extract the transversal part of any tensor, which we will denote by a tilde over it,
\begin{equation}
\tilde{H}{}^{\mu \nu ...}{}_{\rho \lambda ...}\coloneqq h^\mu{}_\alpha h^\nu{}_\beta \cdots \ h^\gamma{}_\rho h^\delta{}_\lambda\cdots \  H^{\alpha \beta ...}{}_{\gamma \delta ...}\,.
\end{equation}

\subsection{Null congruences and optical scalars}

Let us now present some quantities that characterize the behavior of a given lightlike congruence with velocity $k^\mu$. Consider the tensor
\begin{equation}
B^{\nu}{}_{\mu}\coloneqq\mathring{\nabla}_{\mu}k^{\nu}\,.\label{eq: def B tensor}
\end{equation}
Its transversal part can be decomposed as\footnote{We absorb the factor $(\dimM-2)^{-2}$ of the trace part into the definition of $\theta$, as it is usual in the literature.}
\begin{equation}
\tilde{B}{}_{\mu\nu} = \omega_{\mu\nu}+\sigma_{\mu\nu}+h_{\mu\nu}\theta\,,
\end{equation}
where we have introduced
\begin{align}
\omega_{\mu\nu} & \coloneqq \tilde{B}{}_{[\mu\nu]}\,,\\
\theta & \coloneqq\tfrac{1}{\dimM-2}h^{\mu\nu}\tilde{B}{}_{(\mu\nu)}\,,\\
\sigma_{\mu\nu} & \coloneqq \tilde{B}{}_{(\mu\nu)}-h_{\mu\nu}\theta \,.
\end{align}
called, respectively, the \emph{twist tensor}, the \emph{expansion} scalar and the \emph{shear tensor}, and whose expressions in terms of $k^\mu$, $l^\nu$ and the projector $h^\mu{}_\nu$ are collected in the Appendix \ref{app:opticaldecom}. For a given $l^\mu$, this decomposition is unique. Making use of these objects one can
  construct\footnote{ Note that the quantities $\omega_{\mu\nu}\omega^{\mu\nu}$ and $\sigma_{\mu\nu}\sigma^{\mu\nu}$ are non-negative due to the transversality of $\omega_{\mu\nu}$ and $\sigma_{\mu\nu}$.}
\begin{align}
\omega & \coloneqq\sqrt{\tfrac{1}{\dimM-2}\omega_{\mu\nu}\omega^{\mu\nu}}\,, \\
|\sigma| & \coloneqq\sqrt{\tfrac{1}{\dimM-2}\sigma_{\mu\nu}\sigma^{\mu\nu}}\,. 
\end{align}
The objects $\left\{\theta,\,\omega,\,|\sigma|\right\}$ (expansion, twist and shear) are known as the \emph{optical scalars} of the congruence. 

\begin{defn}
\textbf{(Normal congruence)}. A congruence is \emph{normal} if there
exists a family of hypersurfaces orthogonal to the curves it contains.
\end{defn}

A very interesting result is the following \cite{BekaertMorand2013}:

\begin{prop}
\label{Prop: normal geod} In a semi-Riemannian manifold $(M,\,\teng)$,
any normal lightlike congruence is pre-geodetic. Therefore it can
be reparameterized to get a geodetic congruence.
\end{prop}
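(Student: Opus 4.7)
The plan is to exploit the hypersurface-orthogonality of $k^\mu$ to represent its covector as a gradient up to a conformal factor, and then combine the torsion-freeness of $\mathring\nabla$ with the null condition $k^\mu k_\mu=0$ to obtain a pre-geodesic equation $k^\nu\mathring\nabla_\nu k^\mu=\lambda k^\mu$; an affine reparametrization will then finish the job.

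First, normality supplies (locally) a smooth function $\phi$ whose level sets are the orthogonal hypersurfaces, so $k_\mu$ must be collinear with $\partial_\mu\phi$ and we can write $k_\mu=f\,\partial_\mu\phi$ for some non-vanishing scalar $f$. The antisymmetrised covariant derivative then reads
\begin{equation}
\mathring\nabla_{[\nu}k_{\mu]}=(\partial_{[\nu}f)\,\partial_{\mu]}\phi+f\,\mathring\nabla_{[\nu}\partial_{\mu]}\phi=\tfrac{1}{f}(\partial_{[\nu}f)\,k_{\mu]},
\end{equation}
since the Hessian $\mathring\nabla_\nu\partial_\mu\phi$ is symmetric in $\nu\mu$ because $\mathring\Gamma_{\nu\mu}{}^\rho$ is. Separately, metric compatibility together with $k^\nu k_\nu=0$ gives $k^\nu\mathring\nabla_\mu k_\nu=\tfrac12\mathring\nabla_\mu(k^\nu k_\nu)=0$, so $k^\nu\mathring\nabla_\nu k_\mu=2k^\nu\mathring\nabla_{[\nu}k_{\mu]}$. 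Substituting the previous identity and using $k^\nu k_\nu=0$ once more,
\begin{equation}
k^\nu\mathring\nabla_\nu k_\mu=\bigl(k^\nu\partial_\nu\ln f\bigr)\,k_\mu,
\end{equation}
which is exactly the pre-geodesic condition $k^\nu\mathring\nabla_\nu k^\mu=\lambda k^\mu$ with $\lambda\coloneqq k^\nu\partial_\nu\ln f$.

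To turn this into an affinely parametrized geodesic, rescale $k^\mu\mapsto h\,k^\mu$; a short computation yields $(hk)^\nu\mathring\nabla_\nu(hk^\mu)=h\bigl(k^\nu\partial_\nu h+\lambda h\bigr)k^\mu$, so the construction reduces to solving the first-order linear ODE $k^\nu\partial_\nu h=-\lambda h$ along each integral curve of $k^\mu$, which always admits a nowhere-vanishing solution. The only genuinely non-trivial step is the opening one: namely, that normality lets us write $k_\mu=f\,\partial_\mu\phi$ locally. This, however, is exactly the content of the normality hypothesis, since the covector dual to the normal of a regular hypersurface is necessarily proportional to the gradient of any defining function, and these defining functions glue into the required global $\phi$ on the foliated open set.
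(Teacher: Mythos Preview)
Your argument is correct and is the standard proof of this fact. The paper, however, does not supply a proof at all: the proposition is simply stated and attributed to the reference \cite{BekaertMorand2013}, so there is no in-paper argument to compare against. Your derivation---writing $k_\mu=f\,\partial_\mu\phi$ from hypersurface-orthogonality, using the symmetry of the Levi-Civita Hessian to reduce $\mathring\nabla_{[\nu}k_{\mu]}$ to a term proportional to $k_{[\mu}\partial_{\nu]}\ln f$, and then invoking the null condition to kill the remaining contraction---is exactly the textbook route and fills the gap the paper leaves to the cited literature.
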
 

Due to this result, we concentrate on geodetic congruences for which the optical scalars are given by
\begin{align}
\theta & =\tfrac{1}{\dimM-2}\mathring{\nabla}_{\sigma}k^{\sigma}\,,\\
\omega^2 & =\tfrac{1}{\dimM-2}\partial_{[\mu}k_{\nu]}\partial{}^{\mu}k^{\nu}\,,\\
|\sigma|^2 & =\tfrac{1}{\dimM-2}\mathring{\nabla}_{(\mu}k_{\nu)}\mathring{\nabla}^{\mu}k^{\nu}-\theta^{2}\,.
\end{align}
which can be expressed in exterior notation
  as\footnote{The symbol $\dint{\boldsymbol{V}}$ represents the interior product by a vector.}$^{,}$\footnote{For the Hodge star operator acting on a $p$-form we use the following convention,
  \begin{equation}
  \star\dfal \ = \ \tfrac{1}{(\dimM-p)!p!}\, \alpha^{b_1...b_p} \, \LCtensor_{b_1...b_p c_1...c_{\dimM-p}}\, \cofr^{c_1}\wedge ...\wedge\cofr^{c_{\dimM-p}}\,.
  \end{equation}
  in terms of the \emph{Levi-Civita tensor}, $\LCtensor_{a_1 \dots a_\dimM} \coloneqq \sqrt{|\det(g_{ab})|} \ \dimM! \delta^1_{[a_1} ... \delta^\dimM_{a_\dimM]}$.}
\begin{align}
\theta & =\tfrac{1}{\dimM-2}\sign(g)\star\dex\star\dfk\,,\\
\omega^{2} & =\tfrac{1}{2(\dimM-2)}\sign(g)\star\left(\dex\dfk\wedge\star\dex\dfk\right)\,,\\
|\sigma|^{2} & =\tfrac{1}{\dimM-2}\left[\dint{\vfre_{(a}}\mathring{\Dex}\left(\dint{\vfre_{b)}}\dfk\right)\right]\left[\dint{\vfre^{a}}\mathring{\Dex}\left(\dint{\vfre^{b}}\dfk\right)\right]-\theta^{2}\,.
\end{align}
 
We end this subsection on optical scalars by remarking a very useful property of lightlike congruences that relates the nullity of the twist with the existence of wave fronts (see \cite{Kundt1961, KundtTrumper2016,Poisson2002} and \cite[p.~59]{Witten1962}):

\begin{prop}
\label{Prop: normal null twist zero} A lightlike geodetic congruence is normal if and only if the twist $\omega$ vanishes.
\end{prop}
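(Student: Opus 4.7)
The strategy is to recast ``normal'' as a Frobenius-type condition on the 1-form $\dfk$, and then tie that condition to the twist scalar using the null and geodetic hypotheses. By Proposition~\ref{Prop: normal geod} we may assume the congruence is affinely parameterized, $k^\mu\mathring{\nabla}_\mu k^\nu = 0$, because rescalings $k^\mu \to \lambda\, k^\mu$ preserve both the class of orthogonal hypersurfaces and the vanishing of $\omega$.

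The existence of a family of hypersurfaces orthogonal to the curves of the congruence is locally equivalent to $\dfk = f\,\dex \phi$ for some functions $f$ and $\phi$: the tangent space to a level set $\{\phi=\mathrm{const}\}$ is the kernel of $\dex\phi$, and orthogonality to it requires $k_\mu \propto \partial_\mu \phi$. By Frobenius' theorem, this is equivalent to
\begin{equation}
\dfk \wedge \dex \dfk \ =\ 0 \,.
\end{equation}
To relate this to the twist, $k^\nu k_\nu=0$ and the geodetic condition jointly yield $k^\nu\,\partial_{[\mu} k_{\nu]} = 0$, so $\dex \dfk$ annihilates $k$ on both slots. Splitting the identity as $\delta^\mu_\nu = h^\mu{}_\nu + k^\mu l_\nu + l^\mu k_\nu$ in both indices of $\mathring{\nabla}_{[\mu}k_{\nu]}$ and discarding the contractions with $k$, only the transversal--transversal block and a piece of the form $k_{[\mu} u_{\nu]}$ (with $u_\nu k^\nu = 0$) survive, giving
\begin{equation}
\mathring{\nabla}_{[\mu}k_{\nu]} \ =\ \omega_{\mu\nu} \ +\ k_{[\mu} u_{\nu]} \,.
\end{equation}
Wedging with $\dfk$ kills the second piece by $\dfk \wedge \dfk=0$, so $\dfk \wedge \dex \dfk$ is proportional to $\dfk$ wedged with the transversal twist 2-form; since $\dfk$ is linearly independent of any transversal coframe, this wedge product vanishes if and only if $\omega_{\mu\nu}=0$. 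Finally, transversality of $\omega_{\mu\nu}$ together with the positive-definiteness of the induced transversal metric make $\omega_{\mu\nu}\omega^{\mu\nu}$ a sum of squares, so $\omega=0$ is equivalent to $\omega_{\mu\nu}=0$.

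The main technical point is the decomposition $\mathring{\nabla}_{[\mu}k_{\nu]} = \omega_{\mu\nu} + k_{[\mu}u_{\nu]}$: one has to verify carefully that the mixed transversal--longitudinal components of $B_{\mu\nu} = \mathring{\nabla}_\mu k_\nu$ vanish once $k^\mu B_{\mu\nu}=0$ (affine geodetic) and $B_{\mu\nu}k^\nu=0$ (null) have been imposed, leaving only a $k_{[\mu}u_{\nu]}$ tail that is irrelevant after wedging with $\dfk$. Everything else---Frobenius' theorem and positivity of the induced transversal metric---is standard.
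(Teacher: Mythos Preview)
The paper does not actually prove this proposition; it merely states it and refers the reader to \cite{Kundt1961,KundtTrumper2016,Poisson2002} and \cite[p.~59]{Witten1962}. Your argument is therefore not competing with an in-paper proof but supplying one, and the route you take---Frobenius for the distribution $\ker\dfk$, the decomposition of $\mathring{\nabla}_{[\mu}k_{\nu]}$ under the splitting $\delta^\mu_\nu=h^\mu{}_\nu+k^\mu l_\nu+l^\mu k_\nu$, and positivity of $\omega_{\mu\nu}\omega^{\mu\nu}$ on the transversal space---is the standard one (see e.g.\ Poisson or Wald) and is correct.

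One phrasing to sharpen: you write that ``the mixed transversal--longitudinal components of $B_{\mu\nu}$ vanish'' once $k^\mu B_{\mu\nu}=0$ and $B_{\mu\nu}k^\nu=0$. Strictly, only the pieces in which $k$ contracts into $B$ vanish; the pieces with $l$ contracted into $B$ survive, and it is precisely these that assemble into the $k_{[\mu}u_{\nu]}$ tail you mention. Your conclusion is unaffected, since $\dfk\wedge(\dfk\wedge\cdot)=0$ kills this term anyway, but the sentence as written oversells what the null and geodetic conditions buy. Also worth making explicit: for a null $k^\mu$, the ``orthogonal hypersurfaces'' are null and contain the congruence itself (the curves are the generators), so ``normal'' here means integrability of $k^\perp=\ker\dfk$, which is exactly what Frobenius detects.
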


More information on twist-free solutions of pure radiation can be found in \cite{KundtTrumper2016}.

\subsection{Kundt and Brinkmann metrics}

Now we focus our interest in the concept of plane-fronted waves. Kundt defined them in \cite{Kundt1961}, a definition that was also presented in \cite[p.~85--86]{Witten1962} together with a theorem that introduces a characterization: \emph{a plane-fronted wave is a vacuum field that admits a normal null congruence with $|\sigma|=\theta=0$}. Observe that, as a consequence of Propositions \ref{Prop: normal geod} and \ref{Prop: normal null twist zero}, the congruence of a plane-fronted wave is pre-geodetic (so it can be expressed as geodetic changing the velocity appropriately) and, additionally, $\omega=0$. 

These definitions correspond to ``vacuum solutions'' of General Relativity, which might not be solutions for other more general theories. Since we are interested in spacetimes defined in a theory-independent way, we start introducing:\footnote{Note that no allusion to vacuum has been made.}

\begin{defn}
\textbf{(Kundt space)}. A \emph{Kundt space}  is a Lorentzian manifold
that admits a geodetic null congruence with $|\sigma|=\theta=\omega=0$.
\end{defn}

Every point in a Kundt space admits a coordinate chart $\{x^{\mu}\}=\{u,\,v,\,z^{2},...,z^{\dimM-1}\}$
(a \emph{Kundt chart}) in which the line element is expressed:
\begin{equation}
\mathrm{d} s^{2}=2\dex u\dex v+H(u,\,v,\,z)\dex u^{2}+2W_{i}(u,\,v,\,z)\dex u\dex z^{i}+\tilde{g}_{ij}(u,\,z)\dex z^{i}\dex z^{j}\,,
\end{equation}
where $i,j=2,\,...,\,\dimM-1$ and $\tilde{g}_{ij}$ is the spatial metric with signature $(-,...,-)$. We have then a local foliation by spacelike surfaces, those with constant $u$. The expressions for the Christoffel symbols, Riemann and Ricci tensors can be found in \cite{Podolsky2009} or \cite[p.~230--231]{BicakLedvinka2014}. Solutions of this kind in different backgrounds (e.g. with and without cosmological constant) are given in \cite[chap.~18]{GriffithsPodolsky2009} and \cite[chap.~31]{Stephani2003}. 

Observe that the coordinate field $\vpartial_{v}\eqqcolon k^\mu \vpartial_{\mu}$, which satisfies
\begin{equation}
k_{\mu}k^{\mu}=0\,,\qquad\qquad k^\rho\mathring{\nabla}_{\rho}k^{\mu}=0\,, \label{eq:wavevector}
\end{equation}
is indeed the velocity field of the congruence that appears in the definition. Moreover, note that $\vpartial_{v}$ is not a covariantly constant field with respect to $\mathring{\nabla}$,
\begin{equation}
\mathring{\nabla}_{\mu}k^\rho=\mathring{\Gamma}_{\mu v}{}^{\rho} =\tfrac{1}{2}\left(g^{\rho u}\partial_{v}g_{\mu u}+g^{\rho i}\partial_{v}g_{\mu i}\right)\neq 0\,,
\end{equation}
i.e. the tensor $B_{\mu\nu}$ defined in \eqref{eq: def B tensor} is not trivial for Kundt spaces. This expression vanishes if $H$ and $W_{i}$ are independent of the coordinate $v$. This is a well known particular kind of Kundt spaces called Brinkmann spaces \cite{Brinkmann1925}:

\begin{defn}
\textbf{(Brinkmann space)}. A \emph{Brinkmann space}  is a lorentzian manifold that admits a non-vanishing vector field $k^\mu \vpartial_\mu$ which is lightlike and covariantly constant with respect to the Levi-Civita conection, namely
\begin{equation}
k_\mu k^\mu=0\,\quad\text {and}\quad\mathring{\nabla}_{\rho}k^{\mu}=0\,.
\end{equation}
\end{defn}

If we introduce the associated 1-form $\dfk=k_{\mu}\dex x^{\mu}$, this two conditions can be written in the exterior notation, respectively,
\begin{equation}
\dfk\wedge\star\dfk=0\,,\qquad\qquad\mathring{\Dex}k^a =0\,.
\end{equation}
In an analogous way as in the Kundt case, there is a local chart we can always find, $\{x^{\mu}\}=\{u,\,v,\,z^{2},...,z^{\dimM-1}\}$ (a \emph{Brinkmann chart}), that allows to express the metric (see for example \cite{Blanco2011}):
\begin{equation}
\mathrm{d} s^{2}=2\dex u\dex v+H(u,\,z)\dex u^{2}+2W_{i}(u,\,z)\dex u\dex z^{i}+\tilde{g}_{ij}(u,\,z)\dex z^{i}\dex z^{j}\,.\label{eq: Brinkmann general metric}
\end{equation}
Moreover, $H$ or $W_{i}$ (but not both) can always be set to zero with an appropriate redefinition of the spatial coordinates $\{z^{i}\}$ (see for example \cite{Ortin2004}). From now on, when we refer to the Brinkmann metric we will take $W_{i}=0$, so the metric becomes block-diagonal.

\begin{defn}
\textbf{\label{def: ppwave}(pp-wave)}. A \emph{plane-fronted wave with parallel rays }(or\emph{ pp-wave}) is a Brinkmann space admitting a coordinate chart in which the metric is expressed
\begin{equation}
\mathrm{d} s^{2}=2\dex u\dex v+H(u,\,z)\dex u^{2}-\delta_{ij}\dex z^{i}\dex z^{j}\,,\label{eq: ppwave}
\end{equation}
\end{defn}

By calculating the Einstein tensor of \eqref{eq: ppwave} it is straightforward to prove that this is a vacuum solution of the Einstein equations if and only if $H$ is a harmonic function of the transversal coordinates,
\begin{equation}
\partial_{i}\partial^{i}H(u,\,z)=0\,.
\end{equation}
This condition obviously will no longer be true in more general theories.

\subsection{Criteria for gravitational wave spacetimes}

As we have already mentioned, there are many different attempts in the literature trying to (covariantly) characterize spacetimes with gravitational radiation in General Relativity. These approaches are based on a previous analysis of the Einstein equations and the existence of characteristic submanifolds (wave fronts) and bicharacteristics (rays). Several of these criteria are extensively studied in \cite{Zakharov1973}. Based on that reference and the overview in \cite{Puetzfeld2000_Diploma}, here we present some of them:

\begin{itemize}

\item \noindent \textbf{Pirani criterion}. We will say there are free gravitational waves in an empty region of a spacetime if and only if the curvature there is type $\mathbf{II}$, $\mathbf{III}$ or $\mathbf{N}$ in the Petrov classification.

\item \noindent \textbf{Lichnerowicz criterion}. For a non-vanishing curvature $\mathring{\dfR}_{a}{}^{b}\neq0$, we will say there is gravitational radiation if and only if there exists a non-vanishing 1-form $\dfk=k_{\mu}\dex x^{\mu}$ satisfying
\begin{align}
\dfk\wedge\star\mathring{\dfR}_{a}{}^{b} & =0 &  & \Leftrightarrow & k^{\mu}\mathring{R}_{\mu\nu a}{}^{b} & =0\,,\label{eq: LichneC1}\\
\dfk\wedge\mathring{\dfR}_{a}{}^{b} & =0 &  & \Leftrightarrow & k_{[\mu}\mathring{R}_{\nu\rho]a}{}^{b} & =0\,.\label{eq: LichneC2}
\end{align}
Lichnerowicz proved that these two conditions, under the hypothesis $\mathring{\dfR}_{a}{}^{b}\neq0$, imply that $k^{\mu}$ is both lightlike and geodetic \cite{Zakharov1973}. For example, the lightlike condition is immediate contracting \eqref{eq: LichneC2} with $k^{\mu}$ and then substituting \eqref{eq: LichneC1}. In addition, another consequence is that the curvature can be written:
\begin{equation}
\mathring{R}_{\mu\nu\rho\lambda}=b_{\mu\rho}k_{\nu}k_{\lambda}+b_{\nu\lambda}k_{\mu}k_{\rho}-b_{\mu\lambda}k_{\nu}k_{\rho}-b_{\nu\rho}k_{\mu}k_{\lambda}\quad\Leftrightarrow\quad \mathring{R}_{\mu\nu}{}^{\rho\lambda}=4b_{[\mu}{}^{[\rho}k_{\nu]}k^{\lambda]}\,,
\end{equation}
for some symmetric tensor $b_{\mu\nu}=b_{(\mu\nu)}$ with the property
$k^{\mu}b_{\mu\nu}=0$. 

\item \noindent \textbf{Zel'manov criterion}. We will say there is gravitational radiation in a spacetime region if and only if the curvature of this region is not covariantly constant, i.e. $\mathring{\nabla}_{\sigma}\mathring{R}_{\mu\nu a}{}^{b}\neq0$, and verifies the following covariant generalization of the wave equation
\begin{equation}
\mathring{\nabla}^{\sigma}\mathring{\nabla}_{\sigma}\mathring{R}_{\mu\nu a}{}^{b}=0\,.
\end{equation}

This condition is formulated in terms of a particular connection (Levi-Civita). Another criterion very similar to this one but formulated independently of any connection is the Maldybaeva criterion, which is based on a special (metric dependent) operator that acts on differential forms:

\item \noindent \textbf{Maldybaeva criterion}. We will say there is gravitational radiation in a spacetime region if and only if the (non-trivial) curvature 2-form satisfies the wave equation
\begin{equation}
\Delta\mathring{\dfR}_{a}{}^{b}=0\,,
\end{equation}
where $\Delta\coloneqq\dex\delta+\delta\dex$ is the Laplace-de Rham operator.

\end{itemize}

In the context of General Relativity one can easily find the following relations between these criteria for the particular case of Einstein spaces:

\begin{prop}
\textbf{\textup{\label{prop: relation criteria}Relations between criteria for Einstein spaces.}}

Let $(M,\teng)$ be an Einstein space, namely one satisfying $R_{\mu\rho\nu}{}^\rho=cg_{\mu\nu}$ for some real constant $c$. Then, the following statements hold:
\begin{itemize}
\item In vacuum $(c=0)$: Lichnerowicz criterion $\Leftrightarrow$ Petrov type $\mathbf{N}$.
\item Maldybaeva criterion $\Leftrightarrow$ vacuum $(c=0)$ and Petrov type $\mathbf{N}$.
\item Zel'manov criterion $\Rightarrow$ vacuum $(c=0)$ and Petrov type $\mathbf{N}$. \\
The converse (vacuum $+\mathbf{N}\Rightarrow$ Zel'manov) is also true with only the metrics\textup{ \cite[eq. (7.12)]{Zakharov1973}} as exceptions.
\end{itemize}
\end{prop}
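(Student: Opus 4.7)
All three items share a common strategy: recast the criterion as an algebraic or differential condition on $\mathring{R}_{\mu\nu\rho\lambda}$, split Riemann into its Ricci and Weyl pieces (where Einstein gives $\mathring{R}_{\mu\nu}=cg_{\mu\nu}$), and match the trace-free residue against the canonical form of a Petrov type $\mathbf{N}$ Weyl tensor.

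For item one, I would start from the consequence of the Lichnerowicz conditions already displayed in the excerpt,
\[
\mathring{R}_{\mu\nu\rho\lambda} \;=\; 4\,b_{[\mu}{}^{[\rho} k_{\nu]} k^{\lambda]}\,, \qquad k_\mu k^\mu = 0\,, \qquad k^\mu b_{\mu\nu}=0\,.
\]
A single contraction with $k^\lambda$ yields $\mathring{R}_{\mu\nu\rho\lambda}k^\lambda = 0$, which is the defining property of a quadruple principal null direction; since vacuum forces $\mathring{R}_{\mu\nu}=0$ and hence Weyl $=$ Riemann, this is precisely Petrov $\mathbf{N}$. For the converse, I would take the canonical form of a type $\mathbf{N}$ Weyl tensor in vacuum, identify its fourfold principal null direction with $\dfk$, and verify \eqref{eq: LichneC1}--\eqref{eq: LichneC2} by a short antisymmetrisation.

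For Maldybaeva, I would expand $\Delta = \dex\delta + \delta\dex$ acting on the $\mathfrak{gl}$-valued curvature 2-form $\mathring{\dfR}_a{}^b$. The codifferential piece is immediate in an Einstein space: the contracted second Bianchi identity
\[
\mathring{\nabla}^\mu \mathring{R}_{\mu\nu\rho\lambda} \;=\; \mathring{\nabla}_\rho \mathring{R}_{\nu\lambda} - \mathring{\nabla}_\lambda \mathring{R}_{\nu\rho}
\]
vanishes identically because $\mathring{R}_{\nu\lambda} = c g_{\nu\lambda}$ is covariantly constant, so $\delta\mathring{\dfR}_a{}^b = 0$ and the criterion collapses to $\delta\dex\mathring{\dfR}_a{}^b = 0$. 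Using the differential Bianchi $\mathring{\Dex}\mathring{\dfR}_a{}^b = 0$ to rewrite $\dex\mathring{\dfR}_a{}^b$ as a wedge of $\dfom$ with $\mathring{\dfR}$, combined with the standard Weitzenböck identity for a 2-form, produces a Lichnerowicz-type wave equation $\mathring{\nabla}^\sigma\mathring{\nabla}_\sigma \mathring{R}_{\mu\nu\rho\lambda} + (\text{terms quadratic in }\mathring{R}) = 0$. Its trace (times $c$) forces $c=0$, after which the remaining trace-free equation is precisely the algebraic type $\mathbf{N}$ condition from item one.

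For Zel'manov, the forward direction proceeds by applying the second Bianchi and the commutator of covariant derivatives to $\mathring{\nabla}^\sigma\mathring{\nabla}_\sigma \mathring{R}_{\mu\nu\rho\lambda}=0$, which converts the wave equation into a purely algebraic constraint involving $\mathring{R}_{\mu\nu\rho\lambda}$ and $c$; the non-triviality hypothesis $\mathring{\nabla}_\sigma\mathring{R}_{\mu\nu a}{}^{b}\neq 0$ together with the Einstein condition then forces both $c=0$ and the Weyl tensor to be type $\mathbf{N}$. The partial converse and the explicit exceptional family that escape it would be imported from \cite[eq.~(7.12)]{Zakharov1973}. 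The main obstacle throughout is item two: the Weitzenböck bookkeeping on a Lie-algebra-valued 2-form intertwines curvature commutators with the pair and block symmetries of $\mathring{R}_{\mu\nu\rho\lambda}$, and one has to arrange the combinatorics so that the Ricci (trace) and Weyl (trace-free) contributions separate cleanly, because only after this separation does the equivalence with vacuum plus type $\mathbf{N}$ become transparent.
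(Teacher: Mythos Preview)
The paper does not actually prove this proposition. It is stated without proof as a summary of known results drawn from the literature (chiefly \cite{Zakharov1973}), and the text moves directly from the statement to the implication diagram that follows it. There is therefore no ``paper's own proof'' to compare your proposal against.

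On its own merits, your sketch for item one is essentially the standard argument and is fine. For the Maldybaeva and Zel'manov items, however, the proposal remains at the level of strategy rather than proof. The step ``its trace (times $c$) forces $c=0$'' is asserted but not shown, and the passage from the Weitzenb\"ock-type identity to a purely algebraic type-$\mathbf{N}$ constraint on the Weyl tensor is precisely where the substantive Petrov-classification work lives; you acknowledge this as ``the main obstacle'' but do not resolve it. Since the paper itself defers entirely to \cite{Zakharov1973} for these facts, your outline is already more detailed than what the paper offers, but it is still a roadmap rather than a proof.
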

\noindent The result is the following diagram for the criteria we have seen in the case of Einstein spaces:
\begin{equation}
\begin{array}{ccccc}
\text{Zel'manov} & \begin{array}{c}
\Longrightarrow\\
\Longleftarrow_{*}
\end{array} & \text{Pirani (N)}+\text{vacuum} & \Leftrightarrow & \text{Maldybaeva}\\
 &  & \Updownarrow\\
 &  & \text{Lichnerowicz}\\
 &  & +\text{vacuum}
\end{array}
\end{equation}
where $*$ denotes that there are two exceptions. For non-Einstein spaces, the relations become more obscure. 

Finally, it is worth mentioning that there are other criteria, such as the ones by Debever, Bel, etc. More information about them can be found in \cite{Zakharov1973}. 

\section{Extension to metric-affine geometries} \label{sec: GW non-Rieman}

In the previous section we have seen that there are different criteria to classify or categorize metrics in a gravitational wave type. However, in a metric-affine framework we have an additional field, a linear connection $\Gamma_{\mu\nu}{}^\rho$ (or equivalently $\omega_{\mu\nu}{}^a$), and the idea now is to analyze how can we restrict in a reasonable way an arbitrary connection to explore gravitational wave scenarios in these theories.

We start by recalling that Petrov types are based on the classification of the principal null directions of the Weyl tensor. Therefore, a generalization of the Pirani criterion could be possible by understanding the behavior of the irreducible part $\dfW^{(1)}{}_{ab}$ of the new curvature, which is the one that reduces to the Weyl tensor in a Riemannian geometry \cite{Hehl1995}.

For the Maldybaeva and Zel'manov criteria one could generalize the differential operator, the curvature or both. Actually, when working either in MAG or in PG gravity, the fields playing the role of gauge fieldstrengths (gauge curvatures) are the curvature $\dfR_a{}^b$ and the torsion $\dfT^a$ of the connection, so another possibility might be to apply the criterion to both objects.

Finally, we examine the Lichnerowicz criterion for gravitational waves. If one checks in detail electromagnetic wave configurations in classical Maxwell theory, it is easy to see that the curvature form  $\dfF= \tfrac{1}{2}F_{\mu\nu} \dex x^\mu\wedge\dex x^\nu$ associated to $A_\mu$ satisfies the analogous conditions
\begin{equation}
\dfk\wedge\dfF=0\,,\qquad \dfk\wedge\star\dfF=0\,.
\end{equation}
If we now look at these equations in components,
\begin{equation}
k_{[\mu}F_{\nu\rho]}=0\,,\qquad  k^{\mu}F_{\mu\nu}=0\,,
\end{equation}
we realize that they essentially encode the well-known radiation conditions for the electromagnetic field,
\begin{equation}
\delta_{ij}k^{i}E^{j}=\delta_{ij}k^{i}B^{j}=0\,,\qquad\epsilon_{ijk}\frac{k^{i}}{k^{0}}E^{j}=\delta_{ik}B^{i}\,.
\end{equation}
After this motivation, and inspired
  by\footnote{Although, in \cite{BlagojevicCvetkovic2017} the authors use a different generalization: they contract $k$ with the internal indices:
  \[
  k^a R_{\mu\nu ab}=0\,,\qquad R_{\mu\nu[ab}k_{c]}=0\,,\qquad k^a T_{\mu\nu a}=0\,,\qquad T_{\mu\nu[a}k_{c]}=0\,.
  \]}
\cite{Obukhov2017} and \cite{BlagojevicCvetkovic2017} we are going to focus on the Lichnerowicz criterion and its generalization to a metric-affine geometry imposing the corresponding conditions over our curvatures. Considering we are working in a MAG framework, we define the following generalization involving both fieldstrengths:
\begin{defn} {\bf (Generalized) Lichnerowicz Criteria \label{def: GenLC}}
\begin{align}
\text{1LCR}&&  \dfk\wedge\dfR_{a}{}^{b} & =0 &  & \Leftrightarrow & k_{[\mu}R_{\nu\rho]a}{}^{b} & =0\,,\nonumber \\
\text{2LCR}&& \dfk\wedge\star\dfR_{a}{}^{b} & =0 &  & \Leftrightarrow & k^{\mu}R_{\mu\nu a}{}^{b} & =0\,,\label{eq: General Lich R}\\
\text{1LCT}&& \dfk\wedge\dfT{}^{a} & =0 &  & \Leftrightarrow & k_{[\mu}T_{\nu\rho]}{}^{a} & =0\,,\nonumber \\
\text{2LCT}&& \dfk\wedge\star\dfT{}^{a} & =0 &  & \Leftrightarrow & k^{\mu}T_{\mu\nu}{}^{a} & =0\,.\label{eq: General Lich T}
\end{align}
\end{defn}

Now we provide an important general result which allows to express the curvature and the torsion form under these conditions in a very special way.

\begin{prop}
\label{prop: Lich crit and quad general}Consider a lightlike congruence with velocity $k^\mu$. Let $\dfk=k_\mu \dex x^\mu$ be the associated 1-form and $\dfl$ another lightlike 1-form such that $l^{\mu}k_{\mu}=1$. For an arbitrary tensor-valued 2-form $\dfal_{a...}{}^{b...}$, the following results hold:
\begin{enumerate}
\item If  $\dfk\wedge\dfal_{a...}{}^{b...}=0$, $\dfal$ can be expressed
\begin{equation}
\dfal_{a...}{}^{b...}=\dfk\wedge(s_{a...}{}^{b...}\, \dfl +\tilde{\dfbe}_{a...}{}^{b...})\,,
\end{equation}
where $s_{a...}{}^{b...}$ is a tensor-valued 0-form and $\tilde{\dfbe}_{a...}{}^{b...}$ is a tensor-valued transversal 1-form.\footnote{From now on \textit{transversal} means transversal to the congruence generated by $\dfk$ with respect to $\dfl$.}

\item If  $\dfk\wedge\star\dfal_{a...}{}^{b...}=0$, then
\begin{equation}
\dfal_{a...}{}^{b...}=\dfk\wedge\tilde{\dfbe}_{a...}{}^{b...}+\tilde{\dfga}_{a...}{}^{b...} \,,
\end{equation}
for certain tensor-valued transversal forms $\tilde{\dfbe}_{a...}{}^{b...}$ and $\tilde{\dfga}_{a...}{}^{b...}$.

\item If the two conditions of the previous points are fulfilled, the 2-form reduces to
\begin{equation}
\dfal_{a...}{}^{b...}=\dfk\wedge\tilde{\dfbe}_{a...}{}^{b...}\,,
\end{equation}
and the following quadratic condition is
  satisfied,\footnote{Applied to electromagnetic waves in Maxwell theory, this quadratic condition lead us to $\dfF\wedge\star\dfF=0$, namely $F_{\mu\nu} F^{\mu\nu}=0$, that corresponds to the equality $  \delta_{ij}E^iE^j =\delta_{ij}B^iB^j$.}
\begin{equation}
\dfal_{a...}{}^{b...}\wedge\star\dfal_{c...}{}^{d...}=0\, \qquad\big(\alpha_{\mu\nu a...}{}^{b...} \ \alpha^{\mu\nu}{}_{c...}{}^{d...}=0\big) \,.
\end{equation}
\end{enumerate}
\end{prop}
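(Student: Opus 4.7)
My plan is to exploit an adapted null coframe and reduce everything to linear algebra. Complete $\{\dfk,\dfl\}$ to a coframe $\{\dfk,\dfl,\spatcofr^i\}$ ($i=2,\ldots,\dimM-1$) of $T^*M$ by adjoining a basis of transversal 1-forms, so that the interior product with the vector $k^\mu\vpartial_\mu$ acts as $\dint{k^\mu\vpartial_\mu}\dfk=0$, $\dint{k^\mu\vpartial_\mu}\dfl=1$ and $\dint{k^\mu\vpartial_\mu}\spatcofr^i=0$. Suppressing the external tensor indices of $\dfal$ (which are passive throughout), any tensor-valued 2-form admits a unique decomposition
\[
\dfal \;=\; a\,\dfk\wedge\dfl \;+\; \dfk\wedge\tilde{\dform{P}} \;+\; \dfl\wedge\tilde{\dform{Q}} \;+\; \tilde{\dform{N}}\,,
\]
with $\tilde{\dform{P}},\tilde{\dform{Q}}$ transversal 1-forms, $\tilde{\dform{N}}$ a transversal 2-form and $a$ a 0-form. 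The whole proposition then reduces to forcing the appropriate coefficients to vanish.

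For part 1, wedging the expansion with $\dfk$ kills the first two summands and yields $\dfk\wedge\dfl\wedge\tilde{\dform{Q}} + \dfk\wedge\tilde{\dform{N}}$; since $\dfk\wedge\dfl\wedge\spatcofr^i$ and $\dfk\wedge\spatcofr^i\wedge\spatcofr^j$ form a linearly independent set of 3-forms in the adapted basis, the hypothesis $\dfk\wedge\dfal=0$ forces $\tilde{\dform{Q}}=0$ and $\tilde{\dform{N}}=0$, leaving $\dfal=\dfk\wedge(a\,\dfl+\tilde{\dform{P}})$, which is the statement of part 1 with $s=a$ and $\tilde{\dfbe}=\tilde{\dform{P}}$. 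For part 2, I would use the equivalence $\dfk\wedge\star\dfal=0 \Leftrightarrow k^\mu\alpha_{\mu\nu\cdots}=0 \Leftrightarrow \dint{k^\mu\vpartial_\mu}\dfal=0$ (the first equivalence is already recorded in Definition \ref{def: GenLC}); evaluating the interior product term by term on the decomposition yields $\dint{k^\mu\vpartial_\mu}\dfal = -a\,\dfk + \tilde{\dform{Q}}$, whose vanishing forces $a=0$ and $\tilde{\dform{Q}}=0$, leaving $\dfal=\dfk\wedge\tilde{\dform{P}} + \tilde{\dform{N}}$.

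Part 3 is then immediate: combining both sets of constraints leaves only $\dfal = \dfk\wedge\tilde{\dform{P}}$. For the quadratic identity, a direct index calculation from $\alpha_{\mu\nu}=2k_{[\mu}\tilde{P}_{\nu]}$ gives
\[
\alpha_{\mu\nu}\alpha^{\mu\nu} \;=\; 2(k_\mu k^\mu)(\tilde{P}_\nu\tilde{P}^\nu) \;-\; 2(k_\mu\tilde{P}^\mu)^2 \;=\; 0\,,
\]
since $\dfk$ is null and $\tilde{\dform{P}}$ is transversal (hence orthogonal to $\dfk$); the same cancellation occurs for the mixed pairing $\alpha_{\mu\nu a\cdots}{}^{b\cdots}\,\alpha^{\mu\nu}{}_{c\cdots}{}^{d\cdots}$ with any choice of external index groups, which in form notation is $\dfal\wedge\star\dfal=0$. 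The only mildly subtle ingredient is the Hodge-interior-product identity used in part 2, but since the paper already states the component translation $\dfk\wedge\star\dfal=0 \Leftrightarrow k^\mu\alpha_{\mu\nu\cdots}=0$, no real obstacle arises; the external indices $a,b,\ldots$ play no role in any of the manipulations above because none of the operations used (wedge, $\star$, interior product by $k$) acts on them.
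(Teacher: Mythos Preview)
Your proof is correct and follows essentially the same route as the paper's: both start from the same null-adapted decomposition $\dfal = s\,\dfk\wedge\dfl + \dfk\wedge\tilde{\dfbe} + \dfl\wedge\tilde{\dfbe}' + \tilde{\dfga}$, kill $\tilde{\dfbe}'$ and $\tilde{\dfga}$ by linear independence after wedging with $\dfk$ in part 1, and use the identity $\dfk\wedge\star\dfal \propto \star(\dint{k^\mu\vpartial_\mu}\dfal)$ (equivalently your component reformulation) in part 2 to eliminate $s$ and $\tilde{\dfbe}'$. The only cosmetic addition is your explicit index verification of the quadratic identity in part 3, which the paper simply records as an immediate consequence.
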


\begin{proof}
We drop the external indices for simplicity. An arbitrary two form can be decomposed:
\[\dfal= s \ \dfk\wedge\dfl+ \dfk\wedge\tilde{\dfbe}+\dfl\wedge\tilde{\dfbe}' +\tilde{\dfga}\,.\]
where $\tilde{\dfbe}$ and $\tilde{\dfbe}'$ are transversal 1-forms and $\tilde{\dfga}$ is a transversal 2-form. 
\begin{enumerate}
\item The condition tells
\begin{equation}
0  =\dfk\wedge\dfal =\dfk\wedge\dfl\wedge\tilde{\dfbe}'+\dfk\wedge\tilde{\dfga}\,,
\end{equation}
so by linear independence, $\tilde{\dfbe}'=\tilde{\dfga}=0$.
\item Here we use that $\dfk\wedge\star$ acts on differential forms as the operator $\star\dint{(k^\mu \vpartial_\mu)}=\star\dint{\vpartial_v}$ up to a constant. So no components in the direction of $\dex v$ (equivalently, $\dfl$) are allowed, as can be seen by doing the calculation,
\begin{equation}
0 = \dfk\wedge\star\dfal=-\star(-s\ \dfk+\tilde{\dfbe}')\,.
\end{equation}
And this is true if and only if  $s=\tilde{\dfbe}'=0$.
\item They are immediate consequences of the previous results.
\end{enumerate}
\end{proof}

\section{General metric-affine setting} \label{sec: gen setting}

\subsection{Metric structure}

Due to its simplicity, we are interested in a spacetime metric of the Brinkmann type,\footnote{From now on we choose the orientation $\mathcal{E}_{u,v,z^2,...,z^{\dimM-1}}= \sqrt{|g|}$ }
\begin{equation}
\mathrm{d} s^{2}=2\dex u\dex v+H(u,\,z)\dex u^{2}+\tilde{g}_{ij}(u,\,z)\dex z^{i}\dex z^{j}\,.\label{eq: Brink metric}
\end{equation}
As we have seen previously, from this metric we can obtain two relevant (dual) objects. The first one is the \emph{wave vector} $k^{\mu}\vpartial_{\mu}\coloneqq \vpartial_{v}$, that points towards the direction of propagation of the wave. It is autoparallel with respect to $\mathring{\nabla}$ and lightlike by definition of the $v$ coordinate. The other one is the \emph{wave form} $\dfk\coloneqq k_\mu\dex x^{\mu}$, which is indeed $\dex u$ and, consequently, an exact form.

In addition, consider the 1-form 
\begin{equation}
\dfl=l_{\mu}\dex x^{\mu}\coloneqq\tfrac{1}{2}H(u,\,z)\dex u+\dex v\,.
\end{equation} 
It is not difficult to see that it is lightlike and verifies $l_{\mu}k^{\mu}=1$. Clearly, the transversal space of the congruence generated by $k^\mu$ with respect to $l^\mu$ is the one generated by the coordinate vectors in $\{z^{i}\}$ directions,
\begin{equation}
\tilde{T}_{p}M\coloneqq\mathrm{span}_{\mathbb{R}} \left\{ \vpartial_{i}|_{p} \right\}_{i=2}^{\dimM-1} \ \subset \ T_{p}M \,.
\end{equation}

\subsection{Coframe}

In a theory with  $\mathrm{GL}(\dimM,\,\mathbb{R})$ freedom, we can choose a \emph{lightcone gauge} for the first two directions and an \emph{orthonormal gauge} for the transversal space. Then
\begin{equation}
\teng\coloneqq\cofr^{0}\otimes\cofr^{1}+\cofr^{1}\otimes\cofr^{0}-\delta_{IJ}\cofr^{I}\otimes\cofr^{J}\quad=g_{ab}\cofr^a \otimes\cofr^b \,, \label{eq: anhol metric}
\end{equation}
There are several coframes compatible with this gauge. But one special coframe (we will call \emph{gauge basis}) that makes this possible for the Ansatz we have taken for the metric \eqref{eq: Brink metric} is:
\begin{equation}
\{\cofr^a \}=\left\{ \begin{array}{rlr}
\cofr^{0} & \coloneqq\dfk=\dex u \,,\\
\cofr^{1} & \coloneqq\dfl=\frac{1}{2}H\dex u+\dex v\,,\\
\cofr^{I} & \coloneqq e_{i}{}^{I}\dex z^{i}\,
\end{array}\right.
\end{equation}
with dual frame
\begin{equation}
\{\vfre_{a}\}=\left\{ \begin{array}{rlr}
\vfre_{0} & =-\frac{1}{2}H\vpartial_{v}+\vpartial_{u}\,,\\
\vfre_{1} & =\vpartial_{v}\,,\\
\vfre_{I} & =e^{i}{}_{I}\vpartial_{i}\,
\end{array}\right.
\end{equation}
where the vielbeins $e_{i}{}^{I}(u,\,z)$ satisfy
\begin{equation}
-\delta_{IJ}e_{i}{}^{I}e_{j}{}^{J}=\tilde{g}_{ij}\,\qquad
e^{i}{}_{J}e_{i}{}^{I}=\delta_{J}^{I}\,\qquad e^{i}{}_{I}e_{j}{}^{I}=\delta_{j}^{i}\,.
\end{equation}
In order to work always with $\dimM$-dimensional indices and avoid using the index $I$, we define the transversal coframe,
\begin{equation}
\spatcofr^{a}\coloneqq\delta_{I}^{a}\cofr^{I}\,.
\end{equation}
Observe that, as the name suggests, these objects only cover the transversal part of the cotangent space (orthogonal to the 1-forms $\dfk$ and $\dfl$). Using this, an arbitrary element of the coframe can be expressed
\begin{equation}
\cofr^a =l^{a}\dfk+k^a \dfl+\spatcofr^{a}\,,
\end{equation}
The anholonomy two form associated to this coframe and the Levi-Civita connection 1-form of the metric are given 
  by,\footnote{Note that the last term, $-\frac{1}{2}(\tilde{\Omega}_{cab} +\tilde{\Omega}_{bca}-\tilde{\Omega}_{abc})
$, is indeed the Levi-Civita connection 1-form $\mathring{\tilde{\omega}}_{cab}$ associated to the metric $\tilde{g}_{ij}$ of the transversal sections.}
\begin{align}
\dex\cofr^a & =\left(-\tfrac{1}{2}k^a \partial_{b}H+\tilde{\Omega}_{b}{}^{a}\right)\dfk\wedge\spatcofr^{b}+\tfrac{1}{2}\tilde{\Omega}_{bc}{}^{a}\spatcofr^b\wedge\spatcofr^c \,,\\
\mathring{\dfom}_{ab}& =\left(\tilde{\partial}_{[a}Hk_{b]}-\tilde{\Omega}_{[ab]}\right)\dfk-2\tilde{\Omega}_{(cd)}\delta_{[a}^{d}k_{b]}\spatcofr^{c}-\tfrac{1}{2}\left(\tilde{\Omega}_{cab}+\tilde{\Omega}_{bca}-\tilde{\Omega}_{abc}\right)\spatcofr^{c}\,\label{eq: LC general}
\end{align}
where we have defined the transversal objects
\begin{align}
\tilde{\partial}_{a} & \coloneqq\delta_{a}^{I}\partial_{I}=(\delta_{a}^{b}-l^b k_a -k^b l_{a})\partial_{b}\,,\label{eq: transv partial}\\
\tilde{\Omega}_{a}{}^{b} & \coloneqq\delta_{a}^{J}e^{i}{}_{J}\delta_{I}^{b}\Omega_{ui}{}^{I}\,, \label{eq: ntriv anhol 1}\\
\tilde{\Omega}_{ab}{}^{c}&\coloneqq\delta_{a}^{J}\delta_{b}^{K}\delta_{I}^{c}\Omega_{JK}{}^{I} \label{eq: ntriv anhol 2}\,.
\end{align}

\subsection{Connection}

Based on the connections used in \cite{Obukhov2017, Obukhov2006}, we start by analyzing the following one
\begin{equation}
\dfom_{ab}=\mathring{\dfom}_{ab}+(\mathcal{C}_{ab}k_c +\mathcal{P}_{cab})\cofr^c +k_a k_b \dfA+g_{ab}\dfB\,, \label{eq: gen conn}
\end{equation}
where $\dfA=A_a\cofr^a$ and $\dfB=B_a\cofr^a$ are in principle general 1-forms, and $\mathcal{C}_{ab}$ and $\mathcal{P}_{cab}$ are arbitrary tensors satisfying
\begin{equation}
\mathcal{C}_{ab}=\mathcal{C}_{[ab]}\,,\qquad\mathcal{P}_{cab}=\mathcal{P}_{c[ab]}\,,\qquad k^{c}\mathcal{P}_{cab}=0=l^{c}\mathcal{P}_{cab}\,.
\end{equation}

Obviously $\dfom_{ab}$, as a whole, is completely independent of the metric. However we prefer to work with this decomposition into Levi-Civita plus distorsion because the degrees of freedom of the connection are stored within the tensorial objects $\mathcal{C}_{ab}$, $\mathcal{P}_{cab}$, $A_{a}$ and $ B_{a}$. Consequently, expressions like $\nabla_{\mu}\mathcal{C}_{ab}$ make sense. But the objects $\mathcal{C}_{ab}$ and $\mathcal{P}_{cab}$ are somehow metric-dependent because of the extraction of the Levi-Civita part. The actual explicitly non-metric expression of the connection is
\begin{equation}
\dfom_{ab}=(\mathsf{C}_{ab}k_c +\mathsf{P}_{cab})\cofr^c +k_a k_b \dfA+g_{ab}\dfB\,,
\end{equation}
where
\begin{align}
\mathsf{C}_{ab} & \coloneqq\mathcal{C}_{ab}+\tilde{\partial}_{[a}Hk_{b]}-\tilde{\Omega}_{[ab]}\,,\\
\mathsf{P}_{cab} & \coloneqq \mathcal{P}_{cab}-2\tilde{\Omega}_{(cd)} \delta_{[a}^{d}k_{b]}-\tfrac{1}{2}\left(\tilde{\Omega}_{cab} +\tilde{\Omega}_{bca}-\tilde{\Omega}_{abc}\right)\,.
\end{align}
Note that these two objects do not transform as tensors.\footnote{Working in terms of them is more cumbersome since, e.g. $\nabla_{\mu}\mathsf{C}_{ab}$ does not make sense.} Actually, the most important reason to avoid them is that, at the end of the day, the equations of motion of a covariant metric-affine Lagrangian can be written in terms of the curvature, the torsion and the non-metricity that can only depend on the combinations $\mathcal{C}_{ab}$ and $\mathcal{P}_{cab}$, due to their tensorial nature.

For future purposes we introduce the following decompositions
\begin{align}
A_{a}            &=\sA_{a}+\kA l_{a}+Ak_a \,,\\
B_{a}            &=\sB_{a}+\kB l_{a}+Bk_a \,,\\
\mathcal{C}_{ab} & = \sC_{ab}+2\kC_{[a}l_{b]}+2C_{[a}k_{b]}+2Ck_{[a}l_{b]}\,,\\
\mathcal{P}_{cab}& =\sP_{cab}+2\kP_{c[a}l_{b]}+2P_{c[a}k_{b]}+ 2P_{c}k_{[a}l_{b]}\,,
\end{align}
where the tensors $\sA_a$, $\sB_a$, $\kC_a$, $C_a$, $\sC_{ab}$, $P_c$, $P_{ca}$, $\kP_{ca}$ and $\sP_{cab}$ are totally transversal. The curvature, torsion and non-metricity as well as the irreducible decomposition of the last two are collected in Appendix \ref{app: RTQ general conn}.

\subsection{Summary and Lichnerowicz criteria}

Putting together all of the structures described in this section we have the following geometry
\begin{align}
\teng & =\cofr^{0}\otimes\cofr^{1}+\cofr^{1}\otimes \cofr^0-\delta_{IJ}\cofr^{I}\otimes\cofr^{J}\,\nonumber\\
\cofr^a &= \big\{ \cofr^0=\dfk=\dex u, \quad \cofr^1 = \dfl=\tfrac{1}{2}H(u,\,z)\dex u+\dex v, \quad \cofr^I = e_i{}^I(u,\,z)\dex z^i \big\} \,\nonumber\\
\dfom_{ab}&=\mathring{\dfom}_{ab}+(\mathcal{C}_{ab}k_c +\mathcal{P}_{cab})\cofr^c +k_a k_b \dfA+g_{ab}\dfB\,. \label{eq: typeI}
\end{align}

\begin{theorem} \label{thm: LichC typeI}
For a geometry of the type \eqref{eq: typeI}, the condition $\mathrm{1LCT}$ is equivalent to
\begin{equation}
0=P_c=\kP_{ca}=P_{[cd]}=\sP_{[cd]}{}^a= \kB = \sB_a \,;
\end{equation}
the condition $\mathrm{2LCT}$ is equivalent to
\begin{align}
0 & =\kC_a=\kP_{ca}= \kB\,,\nonumber\\
0 & = \sB_c - P_c\,, \nonumber \\
0 & = \kA + C - B \,;
\end{align}
the condition $\mathrm{1LCR}$ is equivalent to
\begin{align}
0&=\dfk\wedge\dex\dfB\,, \nonumber\\
0&=e^i{}_a e^j{}_b \partial_{[i}\sA_{j]}+2\sA_{[a}P_{b]}\,, \nonumber\\
0&=\partial_v \sA_a-\tilde{\partial}_a \kA+2 \kA P_a\,, \nonumber\\
0&=\kP_{ab} \kA\,, \nonumber\\
0&=\sA_{[a}\bar{P}_{b]c}\,,\nonumber\\
0&=\partial_{v}\mathcal{P}_{cab} \ (=k^d \mathring{\nabla}_{d} \mathcal{P}_{cab})\,, \nonumber\\
0&=\spatcofr^d\wedge\spatcofr^c \left(\mathring{\nabla}_{[d}\mathcal{P}_{c]ab}+\mathcal{P}_{[d|eb}\mathcal{P}_{|c]a}{}^{e}\right) \,;
\end{align}
and, finally, the condition $\mathrm{2LCR}$ is equivalent to
\begin{align}
0 & =\partial_v\mathcal{C}_{ab}\ (=k^d\mathring{\nabla}_d\mathcal{C}_{ab})\,,\nonumber\\
0 & =\partial_v\mathcal{P}_{cab}\ (=k^{d}\mathring{\nabla}_{d}\mathcal{P}_{cab})\,,\nonumber\\
0 & = 2 \partial_{[v}A_{u]} + 2C\kA = \partial_v (A-\tfrac{1}{2}H \kA) -\partial_u\kA + 2C\kA\,, \nonumber\\
0 & = \partial_v \sA_a-\tilde{\partial}_a \kA+2 \kA P_a\,,  \nonumber\\
0 & =\kC_{a}\kA\,,\nonumber\\
0 & =\bar{P}_{ab}\kA\,,\nonumber\\
0 & =\partial_v \sB_i - \partial_i \kB \,,\nonumber\\
0 & =2\partial_{[v} B_{u]} =\partial_v (B-\tfrac{1}{2}H\kB) - \partial_u \kB .
\end{align}
\end{theorem}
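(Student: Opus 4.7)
The plan is to compute the torsion 2-form $\dfT^a$ and the curvature 2-form $\dfR_a{}^b$ induced by the connection \eqref{eq: typeI} explicitly, decompose each in the adapted null basis $\{\dfk,\,\dfl,\,\spatcofr^a\}$, and then read off the component conditions by applying Proposition \ref{prop: Lich crit and quad general}. Recall that any tensor-valued 2-form admits the unique split
\[
\dfal = s\,\dfk\wedge\dfl+\dfk\wedge\tilde{\dfbe}+\dfl\wedge\tilde{\dfbe}'+\tilde{\dfga}\,,
\]
in which $\dfk\wedge\dfal=0$ is equivalent to $\tilde{\dfbe}'=\tilde{\dfga}=0$, while $\dfk\wedge\star\dfal=0$ is equivalent to $s=\tilde{\dfbe}'=0$. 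Consequently each of 1LCT, 2LCT, 1LCR and 2LCR reduces to the vanishing of a specific subset of the four projections of $\dfT^a$ or $\dfR_a{}^b$, and the equivalences become automatic once those projections are computed explicitly.

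First I would handle the torsion, which is the simpler half since $\dfT^a=\dform{N}^a{}_c\wedge\cofr^c$, where $\dform{N}_{ab}\coloneqq\dfom_{ab}-\mathring{\dfom}_{ab}=(\mathcal{C}_{ab}k_c+\mathcal{P}_{cab})\cofr^c+k_ak_b\dfA+g_{ab}\dfB$ is the distortion (the Levi-Civita part drops out by torsion-freeness). Expanding $\cofr^c=l^c\dfk+k^c\dfl+\spatcofr^c$ and substituting the transversal decompositions of $\mathcal{C}_{ab},\,\mathcal{P}_{cab},\,A_a,\,B_a$, one obtains $\dfT^a$ already organised into the four basis pieces. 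For 1LCT, the $\dfl\wedge\spatcofr$- and $\spatcofr\wedge\spatcofr$-components involve exactly $P_c,\,\kP_{ca},\,P_{[cd]},\,\sP_{[cd]}{}^a,\,\kB$ and $\sB_a$; for 2LCT, the $\dfk\wedge\dfl$-scalar and the $\dfl\wedge\spatcofr$-component involve $\kC_a,\,\kP_{ca},\,\kB$ together with the linear combinations $\sB_c-P_c$ and $\kA+C-B$, produced by the mixing of the $\dfA,\dfB$ trace parts with the $\mathcal{C},\mathcal{P}$ pieces. Reading off these coefficients yields the first two blocks of the theorem.

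For the curvature I would use the identity $\dfR_a{}^b=\mathring{\dfR}_a{}^b+\mathring{\Dex}\dform{N}_a{}^b+\dform{N}_c{}^b\wedge\dform{N}_a{}^c$. The Riemannian piece $\mathring{\dfR}_a{}^b$ of the Brinkmann metric already satisfies both Lichnerowicz conditions with respect to $\dfk=\dex u$, so the obstructions come entirely from the remaining two summands. The 1LCR system encodes the vanishing of the $\dfl\wedge\spatcofr$- and $\spatcofr\wedge\spatcofr$-pieces: the transversal 2-form part yields the identity $\spatcofr^d\wedge\spatcofr^c\bigl(\mathring{\nabla}_{[d}\mathcal{P}_{c]ab}+\mathcal{P}_{[d|eb}\mathcal{P}_{|c]a}{}^e\bigr)=0$, together with $\dfk\wedge\dex\dfB=0$ and the antisymmetric transversal-derivative condition on $\sA$, while the $\dfl\wedge\spatcofr$-part forces $\partial_v\mathcal{P}_{cab}=0$, the transport equation for $\sA_a$, and the algebraic constraints $\sA_{[a}\kP_{b]c}=0$ and $\kP_{ab}\kA=0$ arising from cross-terms in $\dform{N}\wedge\dform{N}$. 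For 2LCR, killing the $\dfk\wedge\dfl$-scalar and the $\dfl\wedge\spatcofr$-piece yields $\partial_v\mathcal{C}_{ab}=\partial_v\mathcal{P}_{cab}=0$, the $\partial_{[v}A_{u]}$ and $\partial_{[v}B_{u]}$ equations and the algebraic $\kA$-constraints $\kC_a\kA=0$, $\kP_{ab}\kA=0$.

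The hard part will be the curvature bookkeeping: the wedge-square $\dform{N}\wedge\dform{N}$ generates a large number of cross-terms, and because $\dform{N}$ itself mixes all of $\dfA,\,\dfB,\,\mathcal{C}$ and $\mathcal{P}$, one must carefully classify each contribution by the number and type of null-basis legs it carries. A systematic way to tame the combinatorics is to single out upfront the components that bear a $\dfl$-leg, namely $\kA,\,\kB,\,\kC_a$ and $\kP_{ab}$: only these can produce $\dfl\wedge\spatcofr$- or $\dfk\wedge\dfl$-contributions upon wedging with a transversal leg or a $\dfk$, while every other cross-term lies automatically in the $\dfk\wedge(\cdots)$ sector and is invisible to both Lichnerowicz conditions. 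Once this sorting is in place, the projection onto each of the four basis pieces reduces to a routine if lengthy computation, with the explicit expressions of $\dfR_a{}^b$ and $\dfT^a$ compiled in Appendix \ref{app: RTQ general conn} providing the raw input.
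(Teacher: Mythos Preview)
Your proposal is correct and follows essentially the same approach as the paper: the paper's own proof consists of the single sentence ``It follows straightforwardly from the application of Proposition~\ref{prop: Lich crit and quad general} to our particular torsion and curvature 2-forms (see Appendix~\ref{app: RTQ general conn}),'' and your plan simply spells out how that application is actually carried out, computing $\dfT^a$ and $\dfR_a{}^b$ via the distortion $\dform{N}_{ab}$ and then projecting onto the four null-basis sectors to read off the component conditions. Your suggestion to isolate upfront the $\dfl$-carrying pieces $\kA,\kB,\kC_a,\kP_{ab}$ is a sensible organisational device for the $\dform{N}\wedge\dform{N}$ bookkeeping and is fully consistent with the explicit expressions the paper records in Appendix~\ref{app: RTQ general conn}.
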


\begin{proof}
It follows straightforwardly from the application of Proposition \ref{prop: Lich crit and quad general} to our particular torsion and curvature 2-forms (see Appendix \ref{app: RTQ general conn}).
\end{proof}

 Moreover, combining the four conditions we easily arrive at
\begin{corollary} \label{cor: LichC typeI together}
For a geometry of the type \eqref{eq: typeI}, the generalized Lichnerowicz criteria for torsion and curvature (Definition \textup{\ref{def: GenLC}}) is verified if and only if
\begin{align}
0&=P_c=\kP_{ca}=P_{[cd]}=\sP_{[cd]}{}^a= \kB = \sB_a =\kC_a \nonumber\\
0& = \kA + C - B \,,\nonumber\\
0& =\partial_v B =\partial_v\mathcal{C}_{ab}=\partial_v\mathcal{P}_{cab} \,,\nonumber\\
0&=\partial_v \sA_a-\tilde{\partial}_a \kA\,, \nonumber\\
0&= \partial_v (A-\tfrac{1}{2}H \kA) -\partial_u\kA + 2C\kA\,,\nonumber\\
0&=\partial_{[i}\sA_{j]}\,, \nonumber\\
0&=\spatcofr^d\wedge\spatcofr^c\left(\mathring{\nabla}_{[d}\sP_{c]}{}^{ab} +2\mathring{\nabla}_{[d}P_{c]}{}^{[a}k^{b]}+ 2\sP_{[d}{}^{e[a}k^{b]}P_{c]e}+\sP_{[d}{}^{ea}\sP_{c]e}{}^{b}\right) \,.
\end{align}
\end{corollary}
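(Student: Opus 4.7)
The plan is to reduce the four lists of conditions in Theorem \ref{thm: LichC typeI} to a single non-redundant list by first imposing the strongest algebraic constraints and then observing that many of the remaining conditions trivialize under them.

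First I would collect the purely algebraic vanishing-of-tensor constraints from 1LCT and 2LCT. Their union forces $P_c = \kP_{ca} = P_{[cd]} = \sP_{[cd]}{}^a = \kB = \sB_a = \kC_a = 0$, and the 2LCT relation $\sB_c = P_c$ is then automatic. The only surviving scalar identity at this stage is $\kA + C - B = 0$, producing the first two lines of the corollary.

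Next I would walk through the 1LCR and 2LCR conditions and discard those rendered automatic by the algebraic constraints above. The products $\kP_{ab}\kA$, $\bar{P}_{ab}\kA$, $\kC_a\kA$, and $\sA_{[a}\bar{P}_{b]c}$ all vanish identically; $\partial_v\sB_i - \partial_i\kB = 0$ becomes trivial; the $2\kA P_a$ term drops from $\partial_v\sA_a - \tilde{\partial}_a\kA + 2\kA P_a = 0$; and $e^i{}_a e^j{}_b\partial_{[i}\sA_{j]} + 2\sA_{[a}P_{b]} = 0$ collapses to $\partial_{[i}\sA_{j]} = 0$. For the $\dfB$-sector, once $\sB_a = \kB = 0$ one has $\dfB = B\,\dfk$, and since $\dfk = \dex u$ is exact, the 1LCR identity $\dfk \wedge \dex\dfB = 0$ is automatic, while the 2LCR identity $2\partial_{[v}B_{u]} = 0$ reduces to $\partial_v B = 0$. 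The $\partial_v \mathcal{C}_{ab}$ and $\partial_v \mathcal{P}_{cab}$ conditions (which appear in both 1LCR and 2LCR) and the $A$-sector identity are retained as they stand.

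The only real computational step is the nonlinear curvature condition, namely the last line of 1LCR. Since $P_c = \kP_{ca} = 0$, the distortion collapses to $\mathcal{P}_{cab} = \sP_{cab} + 2P_{c[a}k_{b]}$. Substituting this into $\spatcofr^d\wedge\spatcofr^c(\mathring{\nabla}_{[d}\mathcal{P}_{c]ab} + \mathcal{P}_{[d|eb}\mathcal{P}_{|c]a}{}^e)$ and expanding: the derivative piece yields $\mathring{\nabla}_{[d}\sP_{c]}{}^{ab} + 2\mathring{\nabla}_{[d}P_{c]}{}^{[a}k^{b]}$ after dropping $(\mathring{\nabla}k)\,P$ cross-terms via the Brinkmann identity $\mathring{\nabla}_\mu k^\rho = 0$; the quadratic piece expands into nine products which must be pruned using transversality of $\sP$ and $P$ in all indices together with $k^\mu k_\mu = 0$. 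The main obstacle is this quadratic bookkeeping: I would need to verify that the four terms quadratic in $P k$ all vanish (each carrying a $k^2$ factor or a $k$-contraction with a transversal tensor) and that the four $\sP\!\cdot\! P k$ cross-terms assemble into exactly $2\sP_{[d}{}^{e[a}k^{b]}P_{c]e}$, with the pure piece surviving as $\sP_{[d}{}^{ea}\sP_{c]e}{}^b$. Once this identification is confirmed, the corollary follows by assembling all the simplified conditions.
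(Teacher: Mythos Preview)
Your approach is correct and matches what the paper does, which is simply to state that the corollary follows ``combining the four conditions'' of Theorem~\ref{thm: LichC typeI}; you have merely supplied the details the paper omits. One small correction to your quadratic bookkeeping: among the four $(Pk)\cdot(Pk)$ products, three indeed die by transversality of $P_{ca}$ or by $k^{e}k_{e}=0$, but the remaining one, $-P_{[d|e}P_{|c]}{}^{e}\,k^{a}k^{b}$, carries neither a contracted $k^{2}$ nor a $k$--transversal contraction. It vanishes only after the $[d,c]$ antisymmetrization once you invoke the 1LCT constraint $P_{[cd]}=0$, which makes $P_{de}P_{c}{}^{e}$ symmetric in $(d,c)$. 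With that amendment your expansion reproduces exactly the last line of the corollary.
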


As can be proved, the connections chosen in \cite{Obukhov2017, Obukhov2006} are LCR but not LCT, since the condition $0= \kA + C - B$ is not satisfied. For future convenience and due to we are interested in doing part of our calculations generalizing those papers, we introduce the abbreviation LCT* for those connections satisfying all of the conditions except that one. In addition, to make the violation of the LCT as explicit as possible we introduce the scalar function
\begin{equation}
\mathcal{Y}\coloneqq \kA+C-B\,. \label{eq: def Y LCT*}
\end{equation}
At this point, it is not difficult to see that under LCT* the violation of the quadratic condition for the torsion is indeed proportional to the square of it,
\begin{equation}
\dfT{}^a\wedge\star\dfT^b=-\mathcal{Y}^2 k^a k^b\ \volf\,,
\end{equation}
where $\volf\coloneqq\star 1= \sqrt{|g|}\dex u\wedge\dex v\wedge\dex z^2 ...\wedge z^{\dimM-1}$ is the canonical volume form associated to the metric $\teng$.

\section{Subcase: pp-waves and other simplifications}\label{sec: subcase}

If we do not impose additional conditions, we continue having a complicated Ansatz to search for solutions in many metric-affine theories. We are going to restrict further the theory by imposing two simplifications, one in the coframe (equivalently in the holonomic metric $g_{\mu\nu}$) and the other in the connection.

\subsection{Metric and coframe}

First we take the pp-wave case (see Definition \ref{def: ppwave}), i.e. we will assume the transversal space to be only $u$-dependent $\tilde{g}_{ij}(u,\,z)=\tilde{g}_{ij}(u)$, consequently there should be a redefinition of the transversal coordinates $z^i$ such that $\tilde{g}_{ij}$ becomes diagonal. Because of this, consider the case
\begin{equation}
\mathrm{d} s^2 =2\dex u\dex v+H(u,\,z)\dex u^2-\delta_{ij}\dex z^i\dex z^j\,.
\end{equation}
In the metric-affine context where we work in terms of the coframe, this is equivalent to
\begin{equation}
e_i{}^I =\delta_i^I\quad\Leftrightarrow\quad\cofr^I=\delta_i^I\dex z^i\,.
\end{equation}
Now, the anholonomy gets simplified and only $\Omega_{ui}{}^{1}=-\partial_i H$ survives, i.e.
\begin{equation}
\dex\cofr^a =-\tfrac{1}{2} k^a\tilde{\partial}_b H\dfk\wedge\cofr^b\,,
\end{equation}
so
\begin{equation}
\tilde{\Omega}_a{}^b=\tilde{\Omega}_{ab}{}^c=0\,. \label{eq: typeII zero anhol}
\end{equation}

\begin{prop}
In the gauge coframe, the Levi-Civita connection 1-form of the pp-wave metric is:
\begin{equation}
\mathring{\dfom}_{ab}=\tilde{\partial}_{[a}Hk_{b]}\dfk=\partial_{[a}Hk_{b]}\dfk\qquad\Leftrightarrow\qquad\mathring{\omega}_{\mu ab}=k_{\mu}\partial_{[a}Hk_{b]}\,.
\end{equation}
\end{prop}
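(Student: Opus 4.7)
The plan is to invoke the general Levi-Civita formula \eqref{eq: LC general} that was derived earlier for the Brinkmann-type metric in the gauge coframe, and then specialize to the pp-wave subcase using the vanishing of the transversal anholonomy coefficients.

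First I would recall that for a general Brinkmann geometry the Levi-Civita connection 1-form reads
\begin{equation*}
\mathring{\dfom}_{ab} = \left(\tilde{\partial}_{[a}Hk_{b]}-\tilde{\Omega}_{[ab]}\right)\dfk
-2\tilde{\Omega}_{(cd)}\delta_{[a}^{d}k_{b]}\spatcofr^{c}
-\tfrac{1}{2}\left(\tilde{\Omega}_{cab}+\tilde{\Omega}_{bca}-\tilde{\Omega}_{abc}\right)\spatcofr^{c}.
\end{equation*}
In the pp-wave case the transversal metric is flat, $e_i{}^I=\delta_i^I$, so by \eqref{eq: ntriv anhol 1}-\eqref{eq: ntriv anhol 2} both $\tilde{\Omega}_a{}^b$ and $\tilde{\Omega}_{ab}{}^c$ vanish, as was already recorded in \eqref{eq: typeII zero anhol}. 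Substituting this into the general formula kills every contribution except the first, leaving
\begin{equation*}
\mathring{\dfom}_{ab} = \tilde{\partial}_{[a}Hk_{b]}\,\dfk.
\end{equation*}

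Next, to get the second equality $\tilde{\partial}_{[a}Hk_{b]}=\partial_{[a}Hk_{b]}$, I would use the definition \eqref{eq: transv partial} to write
\begin{equation*}
\partial_{a}H-\tilde{\partial}_{a}H \;=\; (l^{b}k_{a}+k^{b}l_{a})\,\partial_{b}H \;=\; k_{a}\,l^{b}\partial_{b}H + l_{a}\,k^{b}\partial_{b}H.
\end{equation*}
Since $k^{b}\partial_{b}H=\partial_{v}H=0$ for a pp-wave, the second term drops out and the first term is proportional to $k_{a}$, so that the difference is annihilated by the antisymmetrization against $k_{b}$ via $k_{[a}k_{b]}=0$.

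Finally, the third form of the identity is just the component translation: $\dfk=\dex u$ means $k_{\mu}=\delta^{u}_{\mu}$, and $\mathring{\dfom}_{ab}=\mathring{\omega}_{\mu ab}\dex x^{\mu}$ yields $\mathring{\omega}_{\mu ab}=k_{\mu}\partial_{[a}Hk_{b]}$. There is no real obstacle here; the content is entirely in the earlier derivation of \eqref{eq: LC general}, and the only mildly subtle point — that $\tilde{\partial}_{a}H$ can be replaced by $\partial_{a}H$ inside the antisymmetrization — uses only the pp-wave property $\partial_{v}H=0$.
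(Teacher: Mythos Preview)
Your proof is correct and follows essentially the same route as the paper: invoke the general formula \eqref{eq: LC general}, kill the $\tilde{\Omega}$-terms using \eqref{eq: typeII zero anhol}, and then expand $\tilde{\partial}_a$ via \eqref{eq: transv partial} to replace it by $\partial_a$ inside the antisymmetrization with $k_b$. The only difference is cosmetic: the paper writes the expansion in one line without explicitly naming $\partial_v H=0$, whereas you isolate that step.
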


\begin{proof}
Obtaining $\mathring{\dfom}_{ab}=\tilde{\partial}_{[a}Hk_{b]}\dfk$ is immediate starting from \eqref{eq: LC general}. To see the rest we need to expand the transversal derivative,
\[
\tilde{\partial}_{[a}Hk_{b]}=(\delta_{[a|}^{c}-l^{c}k_{[a|}-k^{c}l_{[a|})\partial_c Hk_{|b]}=\partial_{[a}Hk_{b]}\,.
\]
\end{proof}

When working with indices in the gauge base (because the connection is non-covariant object under frame transformations), we have
\begin{align}
k^{c}\mathring{\omega}_{cab} & =0\,,\\
l^{c}\mathring{\omega}_{cab} & =\tilde{\partial}_{[a}Hk_{b]}\,,\\
\mathring{\omega}_{ca}{}^{c}\equiv g^{cb}\mathring{\omega}_{cab} & =0\,.
\end{align}
Consequently for any totally transversal tensor   $\tilde{S}_{ab...}{}^{c...}$,
\begin{equation}
\tilde{\partial}^a\tilde{S}_{ab...}{}^{c...}=\partial^a\tilde{S}_{ab...}{}^{c...}=\mathring{\nabla}^{a}\tilde{S}_{ab...}{}^{c...}=\mathring{\nabla}_{a}\tilde{S}^a{}_{b...}{}^{c...}=\partial_{a}\tilde{S}^a{}_{b...}{}^{c...}=\tilde{\partial}_a\tilde{S}^a{}_{b...}{}^{c...}\,.
\end{equation}
Let us insist on that these equations are only true in the gauge basis because $\partial_c g_{ab}=0$ in that particular frame and, as a consequence of our basis choice, $k_a $, $l_a$, $k^a $ and $l^a$ are also constant. If we change the frame, the new anholonomy coefficients would enter the game.

Another consequence is that the covariant derivative of $l^a$ can be written in the following covariant way (valid in any frame, not only in the gauge basis),
\begin{equation}
\mathring{\nabla}_c l^a = \tfrac{1}{2} k_c (k^a l^b- g^{ab})\partial_bH\,
\end{equation}
which implies 
\begin{equation}
\mathring{\nabla}_c l^c = 0\,.
\end{equation}

Finally we present the Levi-Civita curvature and its irreducible parts in the gauge basis which gives very simple and practical expressions
\begin{align}
\mathring{\dfR}_{ab} & =\tilde{\partial}_{c}\tilde{\partial}_{[a}Hk_{b]}\spatcofr^{c}\wedge\dfk\,,\\
\mathring{\dfW}^{(1)}{}_{ab} & =\left(\tilde{\partial}_{c}\tilde{\partial}_{[a}Hk_{b]}-\tfrac{1}{\dimM-2}\tilde{\partial}^{e}\tilde{\partial}_{e}Hk_{[b}g_{a]c}\right)\spatcofr^{c}\wedge\dfk\,,\\
\mathring{\dfW}^{(4)}{}_{ab} & =\tfrac{1}{\dimM-2}\tilde{\partial}^{e}\tilde{\partial}_{e}Hk_{[b}g_{a]c}\spatcofr^{c}\wedge\dfk\,,\\
\mathring{\dfW}^{(6)}{}_{ab} & =0\,.
\end{align}
The non-trivial ones can be covariantized as follows
\begin{align}
\mathring{\dfR}_{ab} & =\mathring{\nabla}_{c}\partial_{[a}Hk_{b]}\cofr^c \wedge\dfk\,,\\
\mathring{\dfW}^{(1)}{}_{ab} & =\left(\mathring{\nabla}_{c}\partial_{[a}Hk_{b]}-\tfrac{1}{\dimM-2}\mathring{\nabla}^{2}Hk_{[b}g_{a]c}\right)\cofr^c \wedge\dfk\,,\\
\mathring{\dfW}^{(4)}{}_{ab} & =\tfrac{1}{\dimM-2}\mathring{\nabla}^{2}Hk_{[b}g_{a]c}\cofr^c \wedge\dfk\,,
\end{align}
where now the latin indices refer to any basis. These are totally $\mathrm{GL}(\dimM,\,\mathbb{R})$-covariant equations. Note that 
\begin{equation}
k^a\mathring{\dfR}_{ab}=k^a\mathring{\dfW}^{(1)}{}_{ab}=k^a\mathring{\dfW}^{(4)}{}_{ab}=0\,.
\end{equation}

\subsection{Connection}

In addition to the pp-wave condition we also impose on the connection the following metric-independent restriction in the gauge basis,
\begin{equation}
\mathsf{P}_{cab}=0\,.
\end{equation}
Note that we specify the basis because $\mathsf{P}_{cab}$ is not a tensor and this condition only holds in very particular frames. This, together with \eqref{eq: typeII zero anhol}, imply that $\mathcal{P}_{cab}=0$ and, therefore,
\begin{equation}
\dfom_{ab}=\mathring{\dfom}_{ab}+\mathcal{C}_{ab}\dfk+k_a k_b \dfA+g_{ab}\dfB\,, \label{eq: conn 2}
\end{equation}
whose torsion and curvature get simplified
\begin{align}
\dfT^a & =\mathcal{C}_c{}^a\dfk\wedge\cofr^c+k^a\dfA\wedge\dfk+\dfB\wedge\cofr^a\,,\nonumber\\
 & =\left[-Ck^a -\kC^{a}-\kA k^a +B k^a -\kB l^{a}\right]\dfk\wedge\dfl\nonumber \\
 & \quad+\left[\kC_c l^{a}+C_{c}k^a +\sC_{c}{}^{a}-k^a \sA_{c}+B\delta_{c}^{a}-\sB_{c}l^{a}\right]\dfk\wedge\spatcofr^c\nonumber \label{eq: T for type II}\\
 & \quad+\left[\kB \delta_{c}^{a}-\tilde{ B}_{c}k^a \right]\dfl\wedge\spatcofr^c +\sB_c\spatcofr^c\wedge\spatcofr^a\,.\\
\dfR_{a}{}^{b} & =\mathring{\dfR}_{a}{}^{b}+\Dex\mathcal{C}_a{}^b\wedge\dfk+k_a k^b\dex\dfA+\delta_{a}^{b}\dex\dfB \nonumber\\
 & =\mathring{\dfR}_a{}^b+\mathring{\Dex}\mathcal{C}_a{}^b\wedge\dfk+k_a k^b \dex\dfA+\delta_a^b\dex\dfB- k_c (k_a \mathcal{C}^{bc}+k^b \mathcal{C}_a{}^c)\dfk\wedge\dfA \label{eq: R for type II} \,,
\end{align}
while no changes in the non-metricity have been made with respect to that of the connection \eqref{eq: gen conn} (see Appendix \ref{app: RTQ general conn}). More details on them and their irreducible decomposition, as well as some other expressions, are presented for completeness in Appendix \ref{app: RTQ conn 2}.

\subsection{Summary and Lichnerowicz criteria}

Again we summarize the geometry we have considered in this section,
\begin{align}
\teng & =\cofr^{0}\otimes\cofr^{1}+\cofr^{1}\otimes \cofr^0-\delta_{IJ}\cofr^{I}\otimes\cofr^{J}\,\nonumber\\
\cofr^a &= \big\{ \cofr^0=\dfk=\dex u, \quad \cofr^1 = \dfl=\tfrac{1}{2}H(u,\,z)\dex u+\dex v, \quad \cofr^I = \delta_i^I\dex z^i \big\} \,\nonumber\\
\dfom_{ab}&=\mathring{\dfom}_{ab}+\mathcal{C}_{ab}k_c \cofr^c +k_a k_b \dfA+g_{ab}\dfB\,. \label{eq: typeII}
\end{align}

In this case the generalized Lichnerowicz criteria give essentially the same as in Theorem \ref{thm: LichC typeI} and Corollary \ref{cor: LichC typeI together} but setting $P_c=\kP_{ab}=P_{ab}=\sP_{abc}=\mathcal{P}_{abc}=0$.

\section{Further restrictions on the connection}\label{sec: subsubcase}

Finally, we consider an additional restriction of the subcase treated in the last section. The metric and the coframe continue being the same, but we consider the connection to be subjected to the following constraints 
\begin{align}\label{eq: conditions final conn}
C & =C(u) \,, &\dfB & = B(u)\dfk \,,\nonumber\\
\kC_a & =0 \,,             &0&=\partial_v \sA_a-\tilde{\partial}_a \kA\,, \nonumber\\
C_a & =C_a(u,\,z) \,, &0&=\partial_v (A-\tfrac{1}{2}H \kA) -\partial_u\kA + 2C\kA \,, \nonumber\\
\sC_{ab} & =\sC_{ab}(u)\,, &0&=\partial_{[i}\sA_{j]}\,.
\end{align}

A few remarks:
\begin{itemize}
\item We have as an immediate corollary $\dex\dfB=0$.
\item The Levi-Civita part remains the same as in \eqref{eq: typeII}. So the Riemannian
curvature is purely Weyl ($\mathring{\dfW}^{(1)}{}_{ab}$) and Ricci
($\mathring{\dfW}^{(4)}{}_{ab}$), while the curvature scalar vanishes.
\item This configuration together with the conditions
\begin{equation}
C_a=\sC_{ab}=C=B=0
\end{equation}
reproduces the Ansatz for the connection in \cite{Obukhov2006}. If, instead, we impose 
\begin{equation}
\sA_a=A=\kA=\sC_{ab}=C=B=0\,,
\end{equation}
we obtain the one in \cite{Obukhov2017}.
\end{itemize}

\begin{theorem}
The geometry \eqref{eq: typeII} together with \eqref{eq: conditions final conn} satisfies both \textup{LCR} and \textup{LCT*}. If, additionally, $\mathcal{Y}=0$, then the full generalized Lichnerowicz criterion is fulfilled.
\end{theorem}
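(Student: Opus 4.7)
The plan is to verify the theorem directly from Theorem~\ref{thm: LichC typeI} and Corollary~\ref{cor: LichC typeI together}, specialized to the pp-wave subcase of Section~\ref{sec: subcase}. Since that subcase already kills the $\mathcal{P}$-sector ($\mathcal{P}_{cab}=0$, so $P_c=\kP_{ca}=P_{cd}=\sP_{cda}=0$), the full list of conditions in Corollary~\ref{cor: LichC typeI together} collapses into a much shorter set, and the task reduces to checking one by one that each remaining equation is either a direct hypothesis in \eqref{eq: conditions final conn} or an immediate algebraic consequence of it.

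Concretely, I would first translate the assumption $\dfB=B(u)\dfk$ into the decomposition of $B_a$, reading off $\sB_a=\kB=0$ and $B_a=B(u)k_a$. Together with the explicit hypothesis $\kC_a=0$, this already disposes of every equation in the 1LCT block and of $\kC_a=\kP_{ca}=\kB=0$ and $\sB_c-P_c=0$ in the 2LCT block. Hence 1LCT is immediate, while 2LCT reduces to the single leftover equation $\kA+C-B=0$, i.e.\ $\mathcal{Y}=0$. This is precisely the statement that LCT* holds always and that the full LCT is equivalent to $\mathcal{Y}=0$.

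For LCR I would then scan the two blocks in Theorem~\ref{thm: LichC typeI}. Every $\mathcal{P}$-bearing and every $\bar{P}$-bearing equation is void; the $v$-derivative conditions $\partial_v\mathcal{C}_{ab}=\partial_v\mathcal{P}_{cab}=\partial_v B=0$ follow from the prescribed $u$- and $(u,z)$-dependence of $\sC_{ab}$, $C_a$, $C$ and $B$ (recall $\kC_a=0$ and $k_a,l_a$ are constant in the gauge basis, so no spurious $v$-terms appear in $\mathcal{C}_{ab}$); $\dfk\wedge\dex\dfB$ is zero since $\dfB=B(u)\dex u$ gives $\dex\dfB=0$; the equations $\partial_v\sA_a=\tilde{\partial}_a\kA$, $\partial_v(A-\tfrac12 H\kA)-\partial_u\kA+2C\kA=0$ and $\partial_{[i}\sA_{j]}=0$ are literal entries of \eqref{eq: conditions final conn}; and $\partial_v\sB_i-\partial_i\kB=0$, $\partial_v(B-\tfrac12 H\kB)-\partial_u\kB=0$ follow from $\sB_i=\kB=0$ and $\partial_vB=0$. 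This exhausts both 1LCR and 2LCR, so LCR holds unconditionally.

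The final clause, that $\mathcal{Y}=0$ upgrades everything to the full generalized Lichnerowicz criterion, is then automatic: LCR and LCT* are already in hand, and $\mathcal{Y}=0$ is exactly the missing ingredient for LCT according to the analysis of 2LCT above. I do not expect a genuine obstacle here; the only point requiring a little care is bookkeeping, namely making sure that in $\mathcal{C}_{ab}=\sC_{ab}+2\kC_{[a}l_{b]}+2C_{[a}k_{b]}+2Ck_{[a}l_{b]}$ no hidden $v$-dependence sneaks in through the basis 1-forms (it does not, because we work in the gauge basis where $k_a$, $l_a$ are constant), so that $\partial_v\mathcal{C}_{ab}=0$ follows strictly from the stated $u$- and $z$-dependence of the pieces.
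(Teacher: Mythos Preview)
Your proposal is correct and follows exactly the approach implicit in the paper: the statement is presented there without an explicit proof, being a direct specialization of Theorem~\ref{thm: LichC typeI} (and Corollary~\ref{cor: LichC typeI together}) to the case $\mathcal{P}_{cab}=0$ together with the restrictions \eqref{eq: conditions final conn}. Your bookkeeping is accurate, including the one subtle point that $\partial_v\mathcal{C}_{ab}=0$ relies on the constancy of $k_a,l_a$ in the gauge basis.
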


Let us now focus on the basic tensors associated to the connection and the properties they acquire under \eqref{eq: conditions final conn}. From now on we will use \eqref{eq: def Y LCT*} to eliminate $\kA$ from all the equations.

\subsection{Torsion and its properties}

Now we present the torsion and its irreducible components,
\begin{align}
\dfT^a & =-\mathcal{Y}k^a \dfk\wedge\dfl+\big[\sC_{c}{}^{a}+(C_{c}- \sA_{c})k^a + B\delta_{c}^{a}\big]\dfk\wedge\spatcofr^c \\
\dfT^{(\mathrm{tr})}{}^{a} & =\tfrac{1}{\dimM-1}\left[(\dimM-2) B-\mathcal{Y}\right]\dfk\wedge\cofr^a \,,\\
\dfT^{(\mathrm{a})}{}^{a} & =\tfrac{1}{3}\sC_{bc}\left(k^a \spatcofr^b\wedge\spatcofr^c+2g^{ac}\dfk\wedge\spatcofr^{b}\right)\\
\dfT^{(\mathrm{tn})}{}^{a} & =\left[\tfrac{1}{3}\sC_{b}{}^a +(C_b- \sA_{b})k^a +\tfrac{1}{\dimM-1}(\mathcal{Y}+ B)\delta_b^a\right]\dfk\wedge\spatcofr^b\\
 & \quad-\tfrac{\dimM-2}{\dimM-1}(\mathcal{Y}+ B)k^a \dfk\wedge\dfl-\tfrac{1}{3}\sC_{bc}k^a \spatcofr^b\wedge\spatcofr^c
\end{align}
It is worth remarking that the totally antisymmetric component is directly connected with the antisymmetric transversal tensor $\sC_{ab}$, and that the trace of the torsion,
\begin{equation}
-\dint{\vfre_a}\dint{\vfre_b}\dfT^{(\mathrm{tr})}{}^b=T_{ab}{}^b=\left[(D-2)B-\mathcal{Y}\right]k_a \,,
\end{equation}
 is proportional to $B$ for geometries LCT (i.e. with $\mathcal{Y}=0$).

\begin{itemize}
\item The operator $\dfk\wedge$ gives zero also when acting on $\dfT^{(\mathrm{tr})}{}^{a}$ since it is proportional to $\dfk$, but for the rest of the irreducible components we have
\begin{equation}
\dfk\wedge\dfT^{(\mathrm{a})}{}^{a}=-\dfk\wedge\dfT^{(\mathrm{tn})}{}^{a}=\tfrac{1}{3}\sC_{bc}k^a \dfk\wedge\spatcofr^b\wedge\spatcofr^c\,.
\end{equation}

\item However the operator $\dfk\wedge\star$ gives zero for $\dfT^{(\mathrm{a})}{}^{a}$ because it only has components in the directions $\dfk$ and $\spatcofr^{c}$. For the rest,
\begin{align}
\dfk\wedge\star\dfT^{(\mathrm{tr})}{}^{a} & =\tfrac{1}{\dimM-1}\left[(\dimM-2) B-\mathcal{Y}\right]k^a \star\dfk\,,\\
\dfk\wedge\star\dfT^{(\mathrm{tn})}{}^{a} & =-\tfrac{\dimM-2}{\dimM-1}(\mathcal{Y}+ B)k^a \star\dfk\,.
\end{align}
\item The contractions of the torsion 2-form and its irreducible components with $k^a $ all vanish,
\begin{equation}
k_a \dfT^{(\mathrm{a})}{}^a=k_a \dfT^{(\mathrm{tr})}{}^a=k_a \dfT^{(\mathrm{tn})}{}^{a}=0\qquad\Rightarrow\qquad k_a \dfT{}^a =0\,.
\end{equation}
This is not true for contractions in the first two indices of the torsion tensor $T_{bc}{}^a$ and their irreducible components, since there are coefficients in the direction of $\dfl$, which give 1 instead of 0 when contracting with $k^a$.

\end{itemize}

\subsection{Non-metricity and its properties}

Now we have
\begin{equation}
\dfQ_{ab}=2 k_a k_b \dfA+2g_{ab} B\dfk\,.
\end{equation}
Therefore the traces are
\begin{align}
\dfQ_{c}{}^{c} =Q_{ac}{}^c \cofr^a & = 2\dimM B\dfk\,,\\
\dint{\vfre^{c}}\dfQ_{ca}=Q_{ca}{}^c &= 2(\mathcal{Y}+2 B-C)k_a \,
\end{align}
The irreducible components of $\dfQ_{ab}$ do not experience any changes with respect to those in Appendix \ref{app: RTQ general conn}, apart from the substitution $\dfB=B\dfk$. 

Let us show some properties of this non-metricity:

\begin{itemize}
\item Contractions of the non-metricity with $k^a$
\begin{align}
k^a \dfQ{}_{ab}=k^a \dfQ^{(\mathrm{tr}1)}{}_{ab} & =2k_b  B\dfk\,,\\
k^a \dfQ^{(\mathrm{tr}2)}{}_{ab} & =\tfrac{2(\dimM-2)}{(\dimM-1)(\dimM+2)}(\mathcal{Y}+ B-C)k_b \dfk\,,\\
k^a \dfQ^{(\mathrm{s})}{}_{ab} & =\tfrac{2(\dimM-2)}{3(\dimM+2)}(\mathcal{Y}+ B-C)k_b \dfk\,,\\
k^a \dfQ^{(\mathrm{tn})}{}_{ab} & =-\tfrac{2(\dimM-2)}{3(\dimM-1)}(\mathcal{Y}+ B-C)k_b \dfk\,.
\end{align}
An immediate consequence is
\begin{equation}
k^a k^b \dfQ^{(N)}{}_{ab}=0\qquad\qquad\forall N\,.
\end{equation}
\item Contracted wedge with the coframe\footnote{Here we are  extracting the antisymmetric part of the corresponding tensors in the first two indices, $Q^{(N)}{}_{[ca]b}$.}
\begin{align}
\dfQ^{(\mathrm{tr}1)}{}_{ab}\wedge\cofr^a  & =-2 B\cofr_b \wedge\dfk\,,\label{eq: Q wedge cofr 1}\\
\dfQ^{(\mathrm{tr}2)}{}_{ab}\wedge\cofr^a  & =\tfrac{2}{\dimM-1}(\mathcal{Y}+ B-C)\cofr_b \wedge\dfk\,,\\
\dfQ^{(\mathrm{tn})}{}_{ab}\wedge\cofr^a  & =2k_b \dfA\wedge\dfk-\tfrac{2}{\dimM-1}(\mathcal{Y}+ B-C)\cofr_b \wedge\dfk\,,\\
\dfQ^{(\mathrm{s})}{}_{ab}\wedge\cofr^a  & =0\,.\label{eq: Q wedge cofr 4}
\end{align}
\item Finally we apply the operators $\dfk\wedge $ and $\dfk\wedge\star $ on the non-metricity form,
\begin{align}
\dfk\wedge\dfQ_{ab} & =2k_a k_b (\mathcal{Y}+ B-C)\dfk\wedge\dfl+2k_a k_b  A_{c}\dfk\wedge\spatcofr^{c}\,,\\
\dfk\wedge\star\dfQ_{ab} & =2k_a k_b (\mathcal{Y}+ B-C)\volf\,.
\end{align}
\end{itemize}

\subsection{Curvature and its properties}

Only the following parts of the curvature survive to the conditions \eqref{eq: conditions final conn}
\begin{align}
\dfZ^{(1)}{}_{ab} & =k_a k_b(2C \sA_c-l^d \calF_{dc})\spatcofr^c\wedge\dfk\,,\\
\dfW^{(1)}{}_{ab} & =\left[2\VdHC_{(cd)}-\tfrac{2}{\dimM-2}g_{cd}\VdHC\right]\delta_{[a}^{d}k_{b]}\spatcofr^c\wedge\dfk\,,\\
\dfW^{(2)}{}_{ab} & =2\VdHC_{[cd]}\delta_{[a}^{d}k_{b]}\spatcofr^c \wedge\dfk\,,\\
\dfW^{(4)}{}_{ab} & =\tfrac{2}{\dimM-2}\VdHC g_{c[a}k_{b]}\spatcofr^c\wedge\dfk\,,
\end{align}
where we have introduced the following transversal tensors
\begin{align}
\VdHC_{a} & \coloneqq\tfrac{1}{2}\tilde{\partial}_a H+C_a\,,\label{eq: objects R type III 1}\\
\VdHC_{ab} & \coloneqq\tilde{\partial}_a\VdHC_b=\mathring{\nabla}_b\VdHC_a-k_b l^c \mathring{\nabla}_c\VdHC_a\,,\\
\VdHC & \coloneqq\VdHC_c{}^c =\tilde{\partial}_c \VdHC^c =\mathring{\nabla}_c\VdHC^c\,,
\end{align}
which are going to dominate the antisymmetric part of the curvature $\dfW{}_{ab}$ and its parts, and the antisymmetric tensor
\begin{equation}
\calF_{dc}\coloneqq2\mathring{\nabla}_{[d} A_{c]}=2e^{\mu}{}_{d}e^{\nu}{}_{c}\partial_{[\mu} A_{\nu]}\,.\label{eq: objects R type III 2}
\end{equation}
These new objects fulfill the relations
\begin{equation}
\tilde{\partial}^{c}\VdHC_{cd}=\tilde{\partial}^{2}\VdHC_{d}\,,\qquad\tilde{\partial}^{c}\VdHC_{dc}=\tilde{\partial}_{d}\VdHC^{c}{}_{c}=\tilde{\partial}_{d}\VdHC\,,
\end{equation}
\begin{equation}
l^d \calF_{dc} \spatcofr^c = (\partial_u\sA_c - \tilde{\partial}_c A) \spatcofr^c \,.
\end{equation}

As a consequence, the final expression for the total curvature is
\begin{equation}
\dfR_{ab}=k_a k_b (2C \sA_c-l^d\calF_{dc})\spatcofr^c\wedge\dfk+2\VdHC_{c[a}k_{b]}\spatcofr^c\wedge\dfk\,. 
\end{equation}
Observe that from this expression we can immediately read the symmetric ($\dfZ{}_{ab}$) and antisymmetric ($\dfW{}_{ab}$) parts of the curvature. Notice also that the antisymmetric one is totally controlled by the tensor $\tilde{V}_{ca}$, while the symmetric part only depends on $A_a$ and its derivatives. Finally we provide some nice properties of this curvature:

\begin{itemize}
\item The Lichnerowicz conditions also hold independently for $\dfZ^{(1)}{}_{ab}$, $\dfW^{(1)}{}_{ab}$, $\dfW^{(2)}{}_{ab}$ and $\dfW^{(4)}{}_{ab}$ since all of them are linear combinations of $\spatcofr^c\wedge\dfk$.
\item In addition to the Lichnerowicz condition, we also have
\begin{equation}
k^a \dfZ_{ab}=k^a \dfW^{(1)}{}_{ab}=k^a \dfW^{(2)}{}_{ab}=k^a \dfW^{(4)}{}_{ab}=0\,.
\end{equation}
thanks to the fact that $k^aC_a=0=k_c\spatcofr^c$. Consequently
\begin{equation}
k^a \dfR_a{}^b{}=0 \,,\qquad k_b\dfR_a{}^b=0\,.
\end{equation}
This result together with the fact that $\dfR_{ab}$ goes in the direction of $\spatcofr^c\wedge\dfk$ tells that any contraction of the curvature tensor $R_{abcd}$ (and hence of any of its irreducible components) with the wave vector $k^a$ vanishes.

\item The traces of the curvature are
\begin{align}
\dint{\vfre^{b}}\dfR_{ba} & =\dint{\vfre^{b}}\dfW^{(4)}{}_{ba}=\VdHC k_a \dfk\,,\\
\dint{\vfre^{b}}\dfR_{ab} & =-\dint{\vfre^{b}}\dfW^{(4)}{}_{ba}=-\VdHC k_a \dfk\,,\\
\dfR_a{}^a & = 0\,,\\
\dint{\vfre^{a}}\dint{\vfre^{b}}\dfR_{ba} & =0\,.
\end{align}
or, equivalently in components,
\begin{align}
R_{acb}{}^c = - R_{ac}{}^{c}{}_b & = \VdHC k_a k_b  \,,\\
R_{abc}{}^c &= 0  \,,\\
R_{ab}{}^{ab} & = 0\,.
\end{align}
\end{itemize}

\subsection{Evaluated MAG Lagrangian}

Finally we would like to end this section with a useful result when working in Metric-Affine Gauge gravity. Consider the (even) MAG Lagrangian in arbitrary dimensions,
\begin{align}
\dfL_{(\dimM)}^{+} & =-\frac{1}{2\kappa}\bigg[2\lambda\volf-a_{0}\dfR^{ab}\wedge\star(\cofr_a\wedge\cofr_b)+\dfT^a\wedge\star\big(a_{1}\dfT^{(\mathrm{tn})}{}_{a}+a_{2}\dfT^{(\mathrm{tr})}{}_{a}+a_{3}\dfT^{(\mathrm{a})}{}_{a}\big)\nonumber \\
 & \qquad\qquad+\dfQ_{ab}\wedge\star\big(b_{1}\dfQ^{(\mathrm{s})}{}^{ab}+b_{2}\dfQ^{(\mathrm{tn})}{}^{ab}+b_{3}\dfQ^{(\mathrm{tr}2)}{}^{ab}+b_{4}\dfQ^{(\mathrm{tr}1)}{}^{ab}+b_{5}\dfQ^{(\mathrm{tr}1)}{}^{bc}{}_{c}\cofr^a \big)\nonumber \\
 & \qquad\qquad+2\big(c_{2}\dfQ^{(\mathrm{tn})}{}_{ab}+c_{3}\dfQ^{(\mathrm{tr}2)}{}_{ab}+c_{4}\dfQ^{(\mathrm{tr}1)}{}_{ab}\big)\wedge\cofr^a \wedge\star\dfT^b \bigg]\nonumber \\
 & \quad-\frac{1}{2\rho}\dfR_{a}{}^{b}\wedge\star\Bigg[\sum_{N=1}^{6}w_{N}\dfW^{(N)}{}^{a}{}_{b}+w_{7}\cofr^a \wedge\big(\dint{\vfre_{c}}\dfW^{(5)}{}^{c}{}_{b}\big)\nonumber \\
 & \qquad\qquad+\sum_{N=1}^{5}z_{N}\dfZ^{(N)}{}^{a}{}_{b}+z_{6}\cofr_c \wedge\big(\dint{\vfre^{a}}\dfZ^{(2)}{}^{c}{}_{b}\big)+\sum_{N=7}^{9}z_{N}\cofr^a \wedge\big(\dint{\vfre_{c}}\dfZ^{(N-4)}{}^{c}{}_{b}\big)\Bigg]\,,\label{eq: MAG even action}
\end{align}
and the 4-dimensional odd parity extension,
\begin{align}
\dfL_{(4)}^{-} & =-\frac{1}{2\kappa}\Big[a_{0}^{-}\dfR^{ab}\wedge\cofr_a\wedge\cofr_b+a_{1}^{-}\dfT^a\wedge\dfT^{(\mathrm{tn})}{}_{a}+2a_{2}^{-}\dfT^a\wedge\dfT^{(\mathrm{tr})}{}_{a}+b^{-}\dfQ_{b}{}^{c}\wedge\dfQ^{(\mathrm{tn})}{}_{ac}\wedge\cofr^a\wedge\cofr^b\nonumber \\
 & \quad\qquad\qquad+2c_{1}^{-}\dfQ^{(\mathrm{tr}1)}{}_{c}{}^{c}\wedge\cofr_a \wedge\dfT^{(\mathrm{a})}{}^{a}+2c_{2}^{-}(\dint{\vfre^{c}}\dfQ_{cb})\wedge\cofr^b \wedge\cofr_a\wedge\dfT^{(\mathrm{a})}{}^{a}\nonumber \\
 & \quad\qquad\qquad+2\dfQ_{ab}\wedge\cofr^a \wedge\left(c_{3}^{-}\dfT^{(\mathrm{a})}{}^{b}+c_{4}^{-}\dfT^{(\mathrm{tn})}{}^{b}\right)\Big]\nonumber\\
 & \quad-\frac{1}{2\rho}\dfR^{ab}\wedge\bigg[w_{1}^{-}\dfW^{(1)}{}_{ab}+2w_{2}^{-}\dfW^{(2)}{}_{ab}+2w_{3}^{-}\dfW^{(3)}{}_{ab}+w_{5}^{-}\dfW^{(5)}{}_{ab}\nonumber \\
 & \quad\qquad\qquad\qquad\qquad\qquad+z_{1}^{-}\dfZ^{(1)}{}_{ab}+2z_{2}^{-}\dfZ^{(2)}{}_{ab}+z_{3}^{-}\dfZ^{(3)}{}_{ab}+z_{4}^{-}\dfZ^{(4)}{}_{ab}\bigg]\,,\label{eq: MAG odd action}
\end{align}
where $\kappa$ and $\rho$ are the gravitational weak and strong coupling constants, and the rest are free dimensionless parameters.

\begin{theorem}
Let $\mathcal{G}=(g_{ab},\,\cofr^a ,\,\dfom_{a}{}^{b})$ be a geometry of the type treated this section, i.e. \eqref{eq: typeII} under the restrictions \eqref{eq: conditions final conn}. Then, in arbitrary dimensions, any even parity linear or quadratic invariant involving exclusively the curvature, the torsion and the non-metricity of the connection, and no derivatives of them, is identically zero. Furthermore, the 4-dimensional odd parity invariants that satisfy the previous requirements also vanish.
\end{theorem}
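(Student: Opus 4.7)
The plan is to exploit the uniform structural feature that, for this geometry, every summand of $\dfR_{ab}$, $\dfT^a$ and $\dfQ_{ab}$ carries at least one factor of the wave $1$-form $\dfk$ in the form part or the null vector $k^\mu$ in the tensor part: the decomposition $\dfQ_{ab}=2k_ak_b\dfA+2g_{ab}B\dfk$ and the explicit expressions for $\dfR_{ab}$ and $\dfT^a$ displayed earlier make this manifest. Combined with the nullity $k^\mu k_\mu=0$, the identity $\dfk\wedge\dfk=0$, and the transversality of the remaining data ($\sC_{bc}$, $\sA_a$, $C_a$, $\VdHC_{ab}$, etc.), this will force every scalar bilinear built from these objects to collapse to zero.

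For the \emph{curvature-curvature} invariants, the full Lichnerowicz criterion for $\dfR_a{}^b$ has already been established for our geometry, and as recorded just above, each irreducible piece $\dfZ^{(1)},\dfW^{(1)},\dfW^{(2)},\dfW^{(4)}$ obeys both 1LCR and 2LCR separately. A polarisation of Proposition~\ref{prop: Lich crit and quad general}(3) — whose argument carries over verbatim when two different $2$-forms both decompose as $\dfk\wedge\tilde{\dfbe}$ — then gives $\dfR^{(M)}{}_a{}^b\wedge\star\dfR^{(N)}{}_c{}^d=0$ for all irreducible labels and index patterns, killing every $R^2$ monomial in $\dfL_{(\dimM)}^{+}$. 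The same pattern, with $\star$ removed, kills the odd-parity $R^2$ monomials in $\dfL_{(4)}^{-}$ since a wedge of two $2$-forms each proportional to $\dfk$ vanishes by $\dfk\wedge\dfk=0$. The linear piece $\dfR^{ab}\wedge\star(\cofr_a\wedge\cofr_b)$ is proportional to the scalar curvature, which is zero here since $R_{ab}{}^{ab}=0$ has already been computed.

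For the $T$-$T$, $Q$-$Q$ and mixed $T$-$Q$ sectors, the main tool is the identity $\dfal\wedge\star\dfbe=\tfrac{1}{p!}\alpha_{\mu_1\cdots\mu_p}\beta^{\mu_1\cdots\mu_p}\volf$. The formulas $\dfT^a\wedge\star\dfT^b=-\mathcal{Y}^2k^ak^b\volf$ and $\dfk\wedge\star\dfQ_{ab}=2k_ak_b(\mathcal{Y}+B-C)\volf$ already displayed above exhibit the mechanism: once the free $a,b$ indices are contracted by the metric against anything carrying a matching $k$, the result is zero via $k^\mu k_\mu=0$. Each irreducible piece of $\dfT$ and $\dfQ$ inherits the same structural decomposition, and the orthogonality of the irreducible decompositions reduces $\dfT^a\wedge\star\dfT^{(N)}_a$ and $\dfQ_{ab}\wedge\star\dfQ^{(N)ab}$ to diagonal pieces whose vanishing is read off from the same contractions of $k$ with itself or with transversal factors. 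The mixed monomials $\dfQ^{(N)}_{ab}\wedge\cofr^a\wedge\star\dfT^{(M)b}$ are brought into the same form by expanding $\cofr^a=l^a\dfk+k^a\dfl+\spatcofr^a$, after which every surviving contraction again involves $k$ paired with another $k$ or with a transversal index.

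The $4$-dimensional odd-parity invariants in $\dfL_{(4)}^{-}$ are handled analogously and, in fact, more transparently: since $\star$ is absent and every factor carries a $\dfk$, most monomials vanish outright by $\dfk\wedge\dfk=0$, and the remainder fall to the same transversality arguments. The principal obstacle is organisational rather than conceptual: $\dfL_{(\dimM)}^{+}$ and $\dfL_{(4)}^{-}$ contain many irreducible combinations with many independent coupling constants, so a direct proof is essentially a systematic case-by-case verification. The cleanest way to avoid the term-by-term slog, and to cover any such invariant rather than only those appearing in the two quoted Lagrangians, is to isolate a master lemma to the effect that any $\dimM$-form bilinear built from our $\dfR,\dfT,\dfQ$ and coframes by wedging and Hodge starring produces a scalar coefficient equal to a sum of contractions each of which vanishes by $k^\mu k_\mu=0$, transversality, or $\dfk\wedge\dfk=0$; the theorem then follows by applying this lemma once to every line of $\dfL_{(\dimM)}^{+}$ and $\dfL_{(4)}^{-}$.
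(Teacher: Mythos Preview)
Your proposal is correct and follows essentially the same strategy as the paper: both arguments exploit the fact that every piece of $\dfR_{ab}$, $\dfT^a$, $\dfQ_{ab}$ carries a factor of $\dfk$ in the form part or $k_a$ in the tensor part, so that each term of the basis Lagrangians $\dfL_{(\dimM)}^{+}$ and $\dfL_{(4)}^{-}$ collapses via $k^\mu k_\mu=0$, transversality, or $\dfk\wedge\dfk=0$, and then appeal to the fact that those terms span all invariants of the stated type. The paper's proof is equally terse (``almost immediate to check''); your polarised use of Proposition~\ref{prop: Lich crit and quad general}(3) for the curvature sector and the proposed master lemma are helpful organisational devices, but note that the non-diagonal $w_7,z_6,\dots,z_9$ terms in $\dfL_{(\dimM)}^{+}$ are not of the form $\dfR^{(M)}\wedge\star\dfR^{(N)}$ and are dispatched instead by the vanishing of $\dfW^{(3,5,6)}$ and $\dfZ^{(2,3,4,5)}$, which the paper states explicitly and you use only implicitly.
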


\begin{proof}
First we use that the irreducible components $\dfZ^{(I)}$ with $I=2,3,4,5$ and $\dfW^{(I)}$ with $I=3,5,6$ are zero. Then using the properties in the previous subsections, it is almost immediate to check that all of the terms that appear in \eqref{eq: MAG even action} vanish independently. Since they form a basis of all possible (linear and quadratic) invariants involving the curvature, the torsion and the non-metricity, then all possible invariants of this order are zero. Something similar happens with the basis of odd invariants in four dimensions built with the terms appearing in \eqref{eq: MAG odd action}.
\end{proof}

In particular, when looking for solutions of this type for the 4-dimensional MAG action $\dfL_{(4)}=\dfL_{(4)}^{+}+\dfL_{(4)}^{-}$, only the cosmological constant term contributes to the evaluated Lagrangian,
\begin{equation}
\left.\dfL_{(4)}\right|_{\mathcal{G}}= -\frac{\lambda}{\kappa}\volf\,.
\end{equation}
This result simplifies considerably the equation of motion of the coframe. To be precise, the term with the interior derivative of the Lagrangian reduces to\footnote{The term $\dint{\vfre_{a}}\dfL$ is in fact the one that comes from the variation of $\sqrt{|g|}$ with respect to the metric in the $(g_{\mu\nu},\,\Gamma_{\mu\nu}{}^{\rho})$ formulation.}
\begin{equation}
\dint{\vfre_{a}}\big(\left.\dfL_{(4)}\right|_{\mathcal{G}}\big)=-\frac{\lambda}{\kappa}\star\cofr_a \,.
\end{equation}

\section{Final comments}\label{sec: conclusion}

In this paper we revised several criteria that can be found in the literature to discern whether a Riemannian spacetime or a region of it belongs to a gravitational wave category, i.e. it contains gravitational radiation. We also recalled that, in the context of General Relativity, some of them are equivalent in vacuum for very simple kinds of metrics. Then we discussed some possibilities for them to be extended to a metric-affine geometry and focused on one of them, the Lichnerowicz criteria. The main motivation for this choice is that this criterion reflects some common features between electromagnetic radiation and gravitational waves. We therefore proposed a generalization of it and showed its implications for a particular geometry. For the metric (or, equivalently the pair formed by the anholonomic metric and the coframe) we considered a Brinkmann space, whereas the linear connection was chosen as a generalization of those studied in the works \cite{Obukhov2006, Obukhov2017}. We then collected the conditions this connection should satisfy in order to respect the proposed generalization of the Lichnerowicz criteria. Finally, we analyzed some particular cases providing several properties of the associated curvature, torsion and non-metricity.

At this point, one important remark is that we have concentrated here on generalizing the criteria used in Riemannian geometry, but there are other conditions to be taken into account, for instance, the symmetries of the metric (\emph{isometries}). In the Brinkmann case, the wave vector $\vpartial_v=k^\mu \vpartial_\mu$ is indeed a Killing vector, which can be seen in the fact that none of the metric components in the Brinkmann chart depends on the $v$ coordinate. Encouraged by this fact, one may also require the linear connection to have zero Lie derivative in the direction of $k^\mu$. Since this is true for the Levi-Civita part, it will be guaranteed whenever the distorsion tensor has zero Lie derivative. For instance, for our configuration \eqref{eq: typeII} expressed in the Brinkmann chart, this condition gives essentially
\begin{equation}
0= \partial_v\mathcal{C}_{\mu\nu}k^{\rho}+k_{\mu}k_{\nu}\partial_vA^\rho + g_{\mu\nu}\partial_v B^\rho\,.
\end{equation}
Contracting appropriately this equation one obtains that all of the tensors that the connection depends on must be $v$-independent. For $\mathcal{C}_{\mu\nu}$ this is true under the generalized Lichnerowicz criteria, but for $A^\mu$ and $B^\mu$ we get new conditions to be considered, which will obviously simplify further our geometries.

It is also worth mentioning the role of the metric in theories beyond General Relativity. The criteria explained in \cite{Zakharov1973} are defined in the context of the differential equations of motion of General Relativity. So in order for our metric Ansatz to be associated to gravitational radiation (in the sense of Lichnerowicz) it should be guaranteed that the equations of motion of the theory for the metric sector are of the same type. In the MAG case, this is true e.g. if the parameters of the action are such that the Riemannian (Levi-Civita) quadratic part in the curvature gives the Gauss-Bonnet invariant. In that case, so the four dimensional theory becomes simply General Relativity plus additional fields (torsion, non-metricity and their derivatives). The compatibility of the criteria with other theories that do not respect these requirements should be carefully studied.  In addition, the precise physical meaning of the generalized Lichnerowicz criterion (Definition \ref{def: GenLC}) in relation to the dynamical equations for the connection in each particular theory is another important question to address. These points and their implications in the MAG theory are aspects that we leave for future research.

The author is currently exploring the dynamics derived from the MAG action \eqref{eq: MAG even action} in arbitrary dimensions and the four dimensional case together with the odd parity terms \eqref{eq: MAG odd action}, searching for solutions of the type analyzed in these pages. 

%%%%%%%%%%%%%%%%%%%%%%%%%%%%%%%%%%%%%%%%%%%%%%%%%%%%%%%%
%%%%%%%%%%%%%%%%%%%%%%%%%%%%%%%%%%%%%%%%%%%%%%%%%%%%%%%%
%%%%%%%%%%%%%%%%%%%%%%%%%%%%%%%%%%%%%%%%%%%%%%%%%%%%%%%%
%\newpage
\vspace{.6cm}
\noindent
{\bf Acknowledgements}\\
The author would like to thank Bert Janssen, Tomi Koivisto, Adri\`a Delhom and Jos\'e Beltr\'an for their useful comments and specially Christian Pfeifer and Yuri Obukhov for helpful discussions and feedback. The author is supported by a PhD contract of the program FPU 2015 with reference FPU15/02864  of the Spanish Ministry of Economy and Competitiveness, which also funded this work through the project FIS2016-78198-P.

%\newpage
%%%%%%%%%%%%%%%%%%%%%%%%%%%%%%%%%%%%%%%%%%%%%%%%%%%%%%%
\appendix
%%%%%%%%%%%%%%%%%%%%%%%%

\section{Table of symbols} \label{app: symbols}

Here we collect our notation providing a table with the symbols we have used throughout the paper and the ranks of the differential forms.

\scriptsize
\renewcommand\arraystretch{1.2}

\noindent
\begin{tabular}{|>{\centering}p{0.15\columnwidth}|>{\centering}p{0.16\columnwidth}|>{\centering}p{0.5\columnwidth}|>{\centering}p{0.1\columnwidth}|}
\hline 
Differential form / tensor notation & Components & Meaning & Rank as diff. form\tabularnewline
\hline 
\multicolumn{4}{|c|}{Dimension, metric and basis}\tabularnewline
\hline 
 & $\dimM$ & Dimension of the manifold & \tabularnewline
$\teng$ & $g_{\mu\nu}$, $g_{ab}$ & Metric & \tabularnewline
$\volf$ &  & Canonical volume form associated to the metric (i.e. $\star1$) & $\dimM$\tabularnewline
$\vpartial_{\mu}$ &  & Coordinate frame & \tabularnewline
$\dex x^{\mu}$ &  & Coordinate coframe & 1\tabularnewline
$\vfre_{a}$ & $e^{\mu}{}_{a}$ & General (or gauge) frame & \tabularnewline
$\cofr^{a}$ & $e_{\mu}{}^{a}$ & General (or gauge) coframe & 1\tabularnewline
$\dex\cofr^{a}$ & $\Omega_{ab}{}^{c}$ & Anholonomy form/coefficients & 1\tabularnewline
\hline 
\multicolumn{4}{|c|}{Linear connection and associated objects}\tabularnewline
\hline 
$\dfom_{a}{}^{b}$ & $\Gamma_{\mu\nu}{}^{\rho}$, $\omega_{\mu a}{}^{b}$ & Linear connection & 1\tabularnewline
$\Dex\dfal$ &  & Exterior covariant derivative associated to the connection & $\text{rank}(\dfal)+1$\tabularnewline
 & $\nabla_{\mu}$ & Covariant derivative associated to the linear connection & \tabularnewline
$\dfQ_{ab}$ & $Q_{\mu\nu\rho}$, $Q_{\mu ab}$ & Nonmetricity associated to the linear connection & 1\tabularnewline
$\dfT{}^{a}$ & $T_{\mu\nu}{}^{\rho}$, $T_{\mu\nu}{}^{a}$,  & Torsion associated to the linear connection & 2\tabularnewline
$\dfR_{a}{}^{b}$ & $R_{\mu\nu\rho}{}^{\lambda}$, $R_{\mu\nu a}{}^{b}$ & Curvature associated to the linear connection & 2\tabularnewline
$\dfW_{a}{}^{b}$ & $W_{\mu\nu\rho}{}^{\lambda}$, $W_{\mu\nu a}{}^{b}$ & Antisymmetric part of the curvature (last two indices) & 2\tabularnewline
$\dfZ_{a}{}^{b}$ & $Z_{\mu\nu\rho}{}^{\lambda}$, $Z_{\mu\nu a}{}^{b}$ & Symmetric part of the curvature (last two indices) & 2\tabularnewline
$\dfQ^{(I)}{}_{ab}$, $\dfT^{(I)}{}^{a}$ &  & Irreducible components of the torsion and non-metricity & (see above) \tabularnewline
$\dfZ^{(I)}{}_{a}{}^{b}$, $\dfW^{(I)}{}_{a}{}^{b}$ &  & Irreducible components of the curvature & (see above) \tabularnewline
$\mathring{\dfom}_{a}{}^{b}$, $\mathring{\Dex}$,... &  $\mathring{\omega}_{\mu a}{}^{b}$,
$\mathring{\nabla}_{\mu}$,... & Levi-Civita connection and associated objects & (see above)\tabularnewline
\hline 
\multicolumn{4}{|c|}{Other operators on differential forms}\tabularnewline
\hline 
$\dex\dfal$ &  & Exterior derivative & $\text{rank}(\dfal)+1$\tabularnewline
$\dint{\vfre_a}\dfal$ &  & Interior product by the vector $\vfre_a$ & $\text{rank}(\dfal)-1$\tabularnewline
$\star\dfal$ &  & Hodge star & $\dimM-\text{rank}(\dfal)$\tabularnewline
\hline 
\multicolumn{4}{|c|}{Optical decomposition}\tabularnewline
\hline 
 & $h^{\mu}{}_{\nu}$ & Transversal proyector & 0\tabularnewline
 & $B_{\mu\nu}$ & Covariant derivative of the velocity & 0\tabularnewline
 & $\tilde{B}_{\mu\nu}$ & Transversal part of $B_{\mu\nu}$ & 0\tabularnewline
 & $\omega_{\mu\nu}$, $\sigma_{\mu\nu}$ & Twist and shear tensors & 0\tabularnewline
 & $\theta$, $\omega$, $|\sigma|$ & Optical scalars (expansion, twist and shear, respectively) & 0\tabularnewline
\hline 
\multicolumn{4}{|c|}{Objects used in the geometry we studied}\tabularnewline
\hline 
$\dfA$, $\dfB$, - , -  & $A_{a}$, $B_{a}$, $\mathcal{C}_{ab}$, $\mathcal{P}_{cab}$  & Tensorial objects in the connection \eqref{eq: typeI} & 1\tabularnewline
 & $A$, $\kA$, $\sA_{a}$ & Pieces in the decomposition of $A_{a}$ & 0\tabularnewline
 & $B$, $\kB$, $\sB_{a}$ & Pieces in the decomposition of $B_{a}$ & 0\tabularnewline
 & $C$, $C_{a}$, $\kC_{a}$, $\sC_{ab}$ & Pieces in the decomposition of $\mathcal{C}_{ab}$ & 0\tabularnewline
 & $P_{c}$, $P_{ca}$, $\kP_{ca}$, $\sP_{cab}$ & Pieces in the decomposition of $\mathcal{P}_{cab}$ & 0\tabularnewline
 & $H$, $W_i$ & Functions in the Brinkmann (or Kundt) metric & 0\tabularnewline
 & $\{u,\,v,\,z^{i}\}$ & Brinkmann (or Kundt) coordinates & \tabularnewline
$\dfk$ & $k_{\mu}$, $k_{a}$ & Wave form (or wave vector when the indices are raised) & 1\tabularnewline
$\dfl$ & $l_{\mu}$, $l_{a}$ & Lightlike form independent of $\dfk$ with $l^\mu k_\mu=1$ & 1\tabularnewline
  & $\mathcal{Y}$ & Scalar parameterizing the violation of the LCT [see \eqref{eq: def Y LCT*}] &0 \tabularnewline
$\dfF$ & $F_{\mu\nu}$ & Electromagnetic 2-form/ tensor & 2\tabularnewline
 & $\tilde{g}_{ij}$, $\tilde{g}_{IJ}$ & Transversal metric  & 0\tabularnewline
$\spatcofr^a$ &  & Transversal coframe & 1 \tabularnewline
 & $\tilde{\partial}_{a}$ & Transversal partial derivative  & \tabularnewline
 & $\tilde{\Omega}_{a}{}^{b}$, $\tilde{\Omega}_{ab}{}^{c}$ & Transversal anholonomy coefficients [see \eqref{eq: ntriv anhol 1} and \eqref{eq: ntriv anhol 2}] & 0 \tabularnewline
 & $\calF_{ab}$, $\VdHC$, $\VdHC_{a}$, $\VdHC_{ab}$ & See definitions \eqref{eq: objects R type III 1}-\eqref{eq: objects R type III 2} & 0\tabularnewline
\hline 
\end{tabular}

\renewcommand\arraystretch{1}
\normalsize

\begin{center}
Table A.1. Symbols and ranks of differential forms.
\end{center}

\section{Useful expressions} \label{app: useful expr}

\subsection{Optical decomposition}\label{app:opticaldecom}

For a lightlike congruence with velocity $k^\mu$ and for any lightlike vector $l^\mu$ such that $k^\mu l_\mu=\epsilon=\pm1$, the general covariant expressions for the twist tensor, the expansion and the shear tensor are
\begin{align}
\theta & =\tfrac{1}{\dimM-2}\left(\mathring{\nabla}_{\sigma}k^{\sigma}-\epsilon l_{\sigma}\dot{k}^{\sigma}\right)\,,\\
\omega_{\mu\nu} & =\mathring{\nabla}_{[\nu}k_{\mu]}-\epsilon(l^{\sigma}\mathring{\nabla}_{\sigma}k_{[\mu})k_{\nu]}+\epsilon l_{[\mu}\dot{k}_{\nu]}-\epsilon k_{[\mu}(l_{|\sigma|}\mathring{\nabla}_{\nu]}k^{\sigma})+k_{[\mu}l_{\nu]}l_{\sigma}\dot{k}^{\sigma}\,,\\
\sigma_{\mu\nu} & =\left[\mathring{\nabla}_{(\mu}k_{\nu)}-\tfrac{1}{\dimM-2}h_{\mu\nu}\mathring{\nabla}_{\sigma}k^{\sigma}\right]-\epsilon(l^{\sigma}\mathring{\nabla}_{\sigma}k_{(\mu})k_{\nu)}-\epsilon k_{(\mu}(l_{|\sigma|}\mathring{\nabla}_{\nu)}k^{\sigma})\nonumber \\
 & \qquad-\epsilon\left[l_{(\mu}\dot{k}_{\nu)}-\tfrac{1}{\dimM-2}h_{\mu\nu}l_{\sigma}\dot{k}^{\sigma}\right]+k_{(\mu}l_{\nu)}l_{\sigma}\dot{k}^{\sigma}+k_{\mu}k_{\nu}(l_{\lambda}l^{\sigma}\mathring{\nabla}_{\sigma}k^{\lambda})
\end{align}
where $\dot{k}^{\sigma}\coloneqq k^\mu \mathring{\nabla}_\mu k^\sigma$, which vanishes in the geodetic case.

\subsection{Curvature, torsion and non-metricity for the  connection \eqref{eq: gen conn}}\label{app: RTQ general conn}

The curvature form \eqref{eq: gen conn} is given by,
\begin{align}
\dfR_{ab} &\nonumber =\mathring{\dfR}_{ab}+\mathring{\Dex}\mathcal{C}_{ab}\wedge\dfk +\mathring{\Dex}\mathcal{P}_{cab}\wedge\spatcofr^{c}+k_a k_b \dex\dfA+g_{ab}\dex\dfB\\
 & \quad+2\left(\mathcal{C}_{c(a}\dfk+\mathcal{P}_{dc(a}\spatcofr^{d}\right)k_{b)}k^{c}\wedge\dfA +\mathcal{P}_{dcb}\mathcal{P}_{ea}{}^{c}\spatcofr^d\wedge\spatcofr^e-2\mathcal{P}_{d[a}{}^{c}\mathcal{C}_{b]c}\dfk\wedge\spatcofr^{d}\,,
\end{align}
and the torsion by 
\begin{align}
\dfT^a & =\mathcal{C}_{c}{}^{a}\dfk\wedge\cofr^c +\mathcal{P}_{cd}{}^{a}\cofr^{cd}+ k^a \dfA\wedge\dfk+\dfB\wedge\cofr^a \,,\\
 & =\left[-Ck^a -\kC^{a}-\kA k^a +B k^a -\kB l^{a}\right]\dfk\wedge\dfl\nonumber \\
 & \quad+\left[\kC_c l^a+C_{c}k^a +\sC_{c}{}^{a}-P_{c}l^{a}+P_{c}{}^{a}-k^a  \sA_c+B\delta_c^a- \sB_c l^a\right]\dfk\wedge\spatcofr^c \nonumber \\
 & \quad+\left[P_{c}k^a +\kP_{c}{}^{a}+\kB\delta_{c}^{a}- \sB_c k^a\right]\dfl\wedge\spatcofr^c \nonumber \\
 & \quad+\left[\kP_{cd}l^{a}+P_{cd}k^a + \sP_{cd}{}^{a}+ \sB_c\delta_d^a\right] \spatcofr^c\wedge\spatcofr^d\,,
\end{align}
with trace and antisymmetric components
\begin{align}
\dfT^{(\mathrm{tr})}{}^a & =\tfrac{1}{\dimM-1}(C+\kA-P_c{}^c)\cofr^a \wedge\dfk -\tfrac{1}{\dimM-1}\bar{P}_{b}{}^{b}\cofr^a \wedge\dfl\nonumber \\
 & \qquad+\tfrac{1}{\dimM-1}(\kC_c + \sP_{bc}{}^b)\cofr^a\wedge\cofr^c -\cofr^a \wedge\dfB\,,\\
\dfT^{(\mathrm{a})}{}^{a} & =g^{ab}\left[2(P_{[b}-\bar{C}_{[b})k_c l_{d]}+ (\sC_{[cd}+2P_{[cd})k_{b]}+2\bar{P}_{[cd} l_{b]}+ \sP_{[bcd]}\right]\cofr^c\wedge\cofr^d\,,
\end{align}
while the other one can be calculated simply by $\dfT^{(\mathrm{tn})}{}^{a} =\dfT^a-\dfT^{(\mathrm{tr})}{}^{a}-\dfT^{(\mathrm{a})}{}^{a}$. 

For the non-metricity we have the following expression
\begin{align}
\dfQ_{ab} & =2\dfom_{(ab)}=2k_a k_b \dfA+2g_{ab}\dfB\,.
\end{align}
Therefore the traces are
\begin{align}
\dfQ_{c}{}^{c} & =2D\dfB\,,\\
\dint{\vfre^{c}}\dfQ_{cb} & =2\kA k_b +2B_{b}\,,
\end{align}
and its irreducible decomposition,
\begin{align}
\dfQ^{(\mathrm{tr}1)}{}_{ab} & =2g_{ab}\dfB\,,\\
\dfQ^{(\mathrm{tr}2)}{}_{ab} & =\tfrac{4\dimM}{(\dimM-1)(\dimM+2)}\kA\left[k_{(a}\cofr_{b)}- \tfrac{1}{\dimM}g_{ab}\dfk\right]\,,\\
\dfQ^{(\mathrm{s})}{}_{ab} & = 2\big[k_{(a}k_b  A_{c)}- \tfrac{2}{\dimM+2}\kA k_{(a}g_{bc)}\big]\cofr^c\,,\\
\dfQ^{(\mathrm{tn})}{}_{ab} & = 2k_a k_b \dfA - \dfQ^{(\mathrm{tr}2)}{}_{ab} -\dfQ^{(\mathrm{s})}{}_{ab} \,.
\end{align}

\subsection{Irreducible decomposition of the curvature and the torsion for \eqref{eq: conn 2}}\label{app: RTQ conn 2}

The irreducible components of the torsion \eqref{eq: T for type II} are
\begin{align}
\dfT^{(\mathrm{tr})}{}^{a} & =\tfrac{1}{\dimM-1}(C+k_c  A^c )\cofr^a \wedge\dfk+\tfrac{1}{\dimM-1}\kC_c \cofr^a \wedge\spatcofr^c +\dfB\wedge\cofr^a \,,\nonumber \\
\dfT^{(\mathrm{a})}{}^{a} & =g^{ab}\left[-2\kC_{[b}k_c l_{d]}+C_{[cd}k_{b]}\right]\cofr^c\wedge\cofr^d\,,\\
\dfT^{(\mathrm{tn})}{}^{a} & =-\left[\tfrac{1}{3}\kC^{a}+\tfrac{\dimM-2}{\dimM-1}(C+k_c  A^c )k^a \right]\dfk\wedge\dfl+\tfrac{2\dimM-5}{3(\dimM-1)}\kC_c k^a \dfl\wedge\spatcofr^c \nonumber \\
 & \quad-\left(\tfrac{1}{\dimM-1}\kC_{d}\delta_{c}^{a}+\tfrac{1}{3}C_{cd}k^a \right)\spatcofr{}^{cd}\\
 & \quad+\left[\tfrac{1}{3}C_{c}{}^{a}+(C_{c}-\tilde{ A}_{c})k^a +\tfrac{\dimM-4}{3(\dimM-1)}\kC_c l^{a}+\tfrac{1}{\dimM-1}(C+k_d  A^{d})\delta_{c}^{a}\right]\dfk\wedge\spatcofr^c\,.
\end{align}
For the curvature \eqref{eq: R for type II} we first separate into antisymmetric and symmetric parts
\begin{align}
(\dfR_{[ab]}\equiv) \quad \dfW_{ab}  & =\mathring{\dfR}_{ab}+\mathring{\Dex}\mathcal{C}_{ab}\wedge\dfk\,,\\
(\dfR_{(ab)}\equiv) \quad \dfZ_{ab} & =k_a k_b \dex\dfA+g_{ab}\dex\dfB-2k_c k_{(a}\mathcal{C}_{b)}{}^{c}\dfk\wedge\dfA\,.
\end{align}
Taking this into account, it can be shown that the irreducible components are
\begin{align}
\dfW^{(3)}{}_{ab} & =\Big(\partial_{v}\sC_{[ab}l_c k_{d]}+\tilde{\partial}_{[c}\sC_{ab}k_{d]}\Big)\cofr^c\wedge\cofr^d\,,\\
\dfW^{(4)}{}_{ab} & =\mathring{\dfR}^{(4)}{}_{ab}-2\tfrac{\dimM-1}{\dimM-2}\dfW^{(6)}{}_{ab}\nonumber \\
   & \quad+\tfrac{1}{\dimM-2}\Big[\partial_{v}(2Cl_{[a}-C_{[a}) +\tilde{\partial}_{c}(2C^{c}k_{[a}+\kC^{c}l_{[a})-\tilde{\partial}_{c}(\sC_{[a}{}^{c}- C\delta_{[a}^{c})\Big]\dfk\wedge\cofr{}_{b]}\nonumber \\
   & \quad+\tfrac{1}{\dimM-2}\Big[\partial_{v}(2Ck_{[a}+\kC_{[a})+ \tilde{\partial}_{c} \kC^{c}k_{[a}\Big]\dfl\wedge\cofr{}_{b]}\nonumber \\
   & \quad+\tfrac{1}{\dimM-2}\Big[-\partial_{v}(C_{c}k_{[a}- \kC_c l_{[a})+ 2\tilde{\partial}_{(c}\kC_{d)}\delta_{[a}^{d}- \tilde{\partial}_{d}(\sC_{c}{}^{d}- C\delta_{c}^{d})k_{[a}\Big]\spatcofr^{c}\wedge\cofr{}_{b]}\,,\\
\dfW^{(5)}{}_{ab} & =\quad\tfrac{1}{\dimM-2}\Big[-\partial_{v}C_{[a}+ \tilde{\partial}_{c} \kC^{c}l_{[a}- \tilde{\partial}_{c}(\sC_{[a}{}^{c}+ C\delta_{[a}^{c})\Big]\dfk\wedge\cofr{}_{b]}\nonumber \\
   & \quad+\tfrac{1}{\dimM-2}\Big[\partial_{v}\kC_{[a}- \tilde{\partial}_{c} \kC^{c}k_{[a}\Big]\dfl\wedge\cofr{}_{b]}\nonumber \\
   & \quad+\tfrac{1}{\dimM-2}\Big[\partial_{v}(C_{c}k_{[a}- \kC_c l_{[a})+2\tilde{\partial}_{[c}\kC_{d]}\delta_{[a}^{d}+ \tilde{\partial}_{d}(\sC_{c}{}^{d}+C\delta_{c}^{d})k_{[a}\Big]\spatcofr^{c}\wedge\cofr{}_{b]}\,,\\
\dfW^{(6)}{}_{ab} & =\tfrac{2}{\dimM(\dimM-1)}(\partial_{v}C +\tilde{\partial}_{c} \kC^{c})\cofr_a\wedge\cofr_b\,,\\
\dfZ^{(2)}{}_{ab} & =\tfrac{1}{2(\dimM-2)}Z_c^{-}\dint{\vfre_{(a|}}\Big\{\dfk\wedge\spatcofr^{c}\wedge[\cofr_{|b)}-(\dimM-2)k_{|b)}\dfl]\Big\}\nonumber \\
   & \quad+\tfrac{1}{2}(e^{i}{}_{c}e^{j}{}_{d}\partial_{[i} A_{j]}+\kC_c \sA_{d})k_{(a}\dint{\vfre_{b)}}\big(\dfk\wedge\spatcofr^c\wedge\spatcofr^d\big)\,,\\
\dfZ^{(3)}{}_{ab}  & =\tfrac{\dimM}{\dimM^{2}-4}Z_c^{-}\left[k_{(a}\cofr_{b)}\wedge\spatcofr^c- \delta_{(a}^{c}\cofr_{b)}\wedge\dfk-\tfrac{2}{\dimM}g_{ab}\dfk\wedge\spatcofr^c\right]\,,\\
\dfZ^{(4)}{}_{ab} & =\tfrac{1}{\dimM}g_{ab}\dfZ_{c}{}^{c}\ =g_{ab}\dex\dfB\,,\\
\dfZ^{(5)}{}_{ab} & =\tfrac{1}{\dimM}Z_c^{+}k_{(a}\cofr_{b)} \wedge\spatcofr^{c} +\tfrac{1}{\dimM}\left[2\big(2(C\kA-\partial_{[u} A_{v]}) +\kC_c \sA^c\big)k_{(a}+ Z_{(a}^{+}\right]\cofr_{b)}\wedge\dfk\,,
\end{align}
where we have introduced the abbreviation $Z_a^\pm\coloneqq 2e^{i}{}_{a}\partial_{[v} A_{i]}\pm \kC_{a} \kA$, and the other three have been omitted because they can be calculated by the ones above by the use of the relations
\begin{align}
\dfW^{(2)}{}_{ab} & =\tfrac{1}{2}\dfW_{ab}+\tfrac{1}{4}\left(\dint{\vfre_{a}}\dint{\vfre_{b}}\dfW_{dc}\right)\cofr^d\wedge\cofr^c-\dfW^{(5)}{}_{ab}\,,\\
\dfW^{(1)}{}_{ab} & =\dfW_{ab}-\dfW^{(2)}{}_{ab}-\dfW^{(3)}{}_{ab}-\dfW^{(4)}{}_{ab}-\dfW^{(5)}{}_{ab}-\dfW^{(6)}{}_{ab}\,,\\
\dfZ^{(1)}{}_{ab} & =\dfZ_{ab}-\dfZ^{(2)}{}_{ab}-\dfZ^{(3)}{}_{ab}-\dfZ^{(4)}{}_{ab}-\dfZ^{(5)}{}_{ab}\,.
\end{align}

\subsection{Other expressions derived from the connection \eqref{eq: conn 2}}\label{app: other properties conn 2}

The general derivatives of $\dfk$ and $\dfl$ are
\begin{align}
\nabla_{c}k^a  & =-(Ck^a +\kC^{a})k_c +k^a  B_{c}\,,\\
(\nabla_{c}-\mathring{\nabla}_c)l^{a} & =(Cl^{a}-C^{a})k_c +k^a  A_{c}+l^{a} B_{c}\,.
\end{align}
With these equations and the following properties of the distorsion tensor (defined as the difference between the connection and the Levi-Civita one)
\begin{align}
g^{ca}(\dfom_{ca}{}^b-\mathring{\dfom}_{ca}{}^b) & =\left(\kA-C\right)k^b+ B^b-\kC^b\,,\\
k^c(\dfom_{ca}{}^b-\mathring{\dfom}_{ca}{}^b) & =\kA k_a k^b+\kB\delta_a^b\,,\\
l^c(\dfom_{ca}{}^b-\mathring{\dfom}_{ca}{}^b) & =\mathcal{C}_a{}^b+A k_a k^b+B\delta_a^b\,.
\end{align}
one can prove for transversal tensors of arbitrary number of indices
\begin{align}
k^{c}\nabla_{c}\tilde{S}_{a...}{}^{b...} & =\underbrace{k^{c}\mathring{\nabla}_{c}}_{\partial_{v}}\tilde{S}_{a...}{}^{b...}+(n^{\text{up}}-n_{\text{down}})\kB \tilde{S}_{a...}{}^{b...}\,,\\
l^{c}\nabla_{c}\tilde{S}_{a...}{}^{b...} & =l^{c}\mathring{\nabla}_{c}\tilde{S}_{a...}{}^{b...}+(n^{\text{up}}-n_{\text{down}})B\tilde{S}_{a...}{}^{b...}\nonumber \\
 & \quad-(\sC_{a}{}^{d}-l_{a}\kC^d-k_a C^d) \tilde{S}_{d...}{}^{b...}- ...+(\sC_{d}{}^{b}+l^b \kC_{d}+ k^b C_{d})\tilde{S}_{a...}{}^{d...}+...\,,\\
k^a \nabla_{c}\tilde{S}_{ab...}{}^{d...} & =\kC^{a}k_c S_{ab...}{}^{d...}\,,\\
l^{a}\nabla_{c}\tilde{S}_{ab...}{}^{d...} & =(C^{a}k_c -\mathring{\nabla}_{c}l^{a})\tilde{S}_{ab...}{}^{d...}\,,\\
\nabla^{a}\tilde{S}_{ab...}{}^{c...} & =\mathring{\nabla}^{a}\tilde{S}_{ab...}{}^{c...}+\left[\kC^{a}+(n^{\text{up}}-n_{\text{down}}-1) \sB^{a}\right]\tilde{S}_{ab...}{}^{c...}\,,
\end{align}
where $n^{\text{up}}$ and $n_{\text{down}}$ are respectively the
number of indices up (contravariance) and down (covariance) of the
tensor $\tilde{S}_{a...}{}^{b...}$. These properties are extremely useful in order to eliminate or reduce derivatives that appear in the equations of motion of metric-affine theories.

%%%%%%%%%%%%%%%%%%%%%%%%%%%%%%%%%%%%%%%%%%%%%%%%%%%%%%%%%%
{\small

}

%\bibitem{McCrea} J. D. McCrea, 
%Class. Quant. Grav. {\bf 9} (1992), no. 2, 553--568.

%\bibliographystyle{ieeetr}
%\bibliography{AnsatzWaveMAG_bibliog} 

\begin{thebibliography}{99}

\bibitem{Blagojevic2001} 
M. Blagojevi\'c, 
{\it Gravitation and gauge symmetries}, 
CRC Press (2001).

\bibitem{Hehl1995} 
F. W. Hehl, J. D. McCrea, E. W. Mielke, and Y. Ne'eman, 
{\it Metric-affine gauge theory of gravity:field equations, Noether identities, world spinors, and breaking of dilation invariance}, 
Phys. Rep. 258 1--171 (1995).

\bibitem{Hehl1976a} 
F. W. Hehl, G. D. Kerlick, and P. v. d. Heyde,
{\it On hypermomentum in general relativity I. The notion of hypermomentum}, 
Z. Naturforsch. A: Phys. Sci. 31, 111--114 (1976).

\bibitem{Hehl1976c} 
F. W. Hehl, G. D. Kerlick, and P. v. d. Heyde,
{\it On hypermomentum in general relativity III. Coupling hypermomentum to geometry},
Z. Naturforsch. A: Phys. Sci. 31, 823--827 (1976).

\bibitem{ObukhovTresguerres1993} 
Y. N. Obukhov and R. Tresguerres,
{\it Hyperfluid - a model of classical matter with hypermomentum},
Phys. Lett. A 184, 17--22 (1993).

\bibitem{PercacciSezgin2019} 
R. Percacci and E. Sezgin,
{\it A New Class of Ghost and Tachyon Free Metric Affne Gravities},
Phys. Rev. D 101, 084040 (2020).

\bibitem{Percacci2020} 
R. Percacci,
{\it Towards Metric-Affine Quantum Gravity},
arXiv: 2003.09486 (2020).

\bibitem{AfonsoOlmoRubiera2018} 
V. I. Afonso, G. J. Olmo, and D. Rubiera-Garcia,
{\it Mapping Ricci-based theories of gravity into general relativity},
Phys. Rev. D 97, 021503 (2018).

\bibitem{AfonsoOlmoOraziRubiera2019} 
V. I. Afonso, G. J. Olmo, E. Orazi, and D. Rubiera-Garcia,
{\it A correspondence between modified gravity and General Relativity with scalar fields},
Phys. Rev. D 99, 044040 (2019).

\bibitem{OlmoRubieraWojnar2019} 
G. J. Olmo, d. Rubiera-Garcia, and A. Wojnar,
{\it Stellar structure models in modified theories of gravity: lessons and challenges},
arXiv: 1912.05202 (2019).

\bibitem{Olmo2011} 
G. J. Olmo,
{\it Palatini Approach to Modified Gravity: f(R) Theories and Beyond},
Int. J. Mod. Phys. D 20, 413--462 (2011).

\bibitem{Koivisto2010} 
T. S. Koivisto,
{\it Bouncing Palatini cosmologies and their perturbations},
Phys. Rev. D 82, 044022 (2010).

\bibitem{Pani2012} 
P. Pani, T. Delsate and V. Cardoso,
{\it Eddington-inspired Born-Infeld gravity. Phenomenology of non-linear gravity-matter coupling},
Phys. Rev. D 85, 084020 (2012).

\bibitem{AfonsoOlmoOraziRubiera2018} 
V. I. Afonso, G. J. Olmo, E. Orazi and D. Rubiera-Garcia,
{\it Mapping nonlinear gravity into General Relativity with nonlinear electrodynamics},
Eur. Phys. J. C 78, 866 (2018).

\bibitem{BeltranDelhom2019} 
J. Beltr\'an Jim\'enez and A. Delhom,
{\it Ghosts in metric-affine higher order curvature gravity},
Eur. Phys. J. C 79, 656 (2019).

\bibitem{BeltranDelhom2020} 
J. Beltr\'an Jim\'enez and A. Delhom,
{\it Instabilities in Metric-Affine Theories of Gravity},
arXiv: 2004.11357 (2020).

\bibitem{BeltranHeisenbergOlmoRubiera2018} 
J. Beltr\'an Jim\'enez, L. Heisenberg, G. J. Olmo, and D. Rubiera-Garcia,
{\it Born-Infeld inspired modifications of gravity},
Phys. Rept. 727, 1--129 (2018).

\bibitem{DelhomOlmoOrazi2019} 
A. Delhom, G. J. Olmo, and E. Orazi,
{\it Ricci-Based Gravity theories and their impact on Maxwell and nonlinear electromagnetic models},
JHEP 11, 149 (2019).

\bibitem{BeltranHeisenbergKoivisto2018} 
J. Belt\'an Jim\'enez, L. Heisenberg, and T. Koivisto,
{\it Teleparallel Palatini theories},
JCAP 2018, 039 (2018).

\bibitem{AldrovandiPereira2012} 
R. Aldrovandi and J. G. Pereira,
{\it Teleparallel gravity: an introduction}.
Springer (2012).

\bibitem{BeltranDialektopoulos2020} 
J. Beltr\'an Jim\'enez and K. F. Dialektopoulos,
{\it Non-Linear Obstructions for Consistent New General Relativity},
JCAP 2001, 018 (2020).

\bibitem{KoivistoTsimperis2018} 
T. S. Koivisto and G. Tsimperis,
{\it The spectrum of teleparallel gravity},
arXiv: 1810.11847 (2018).

\bibitem{KrssakHoogenPereira2018}
M. Kr{\v{s}}{\v{s}}{\'a}k, R. J. van den Hoogen, J. G. Pereira, C. G. Boehmer, and A. A. Coley,
{\it Teleparallel Theories of Gravity: Illuminating a Fully Invariant Approach},
Class. Quant. Grav. 36, 183001 (2018).

\bibitem{BeltranJimenezCano2019} 
J. Beltr\'an Jim\'enez, L. Heisenberg, D. Iosifidis, A. Jim\'enez-Cano, and T. S. Koivisto,
{\it General teleparallel quadratic gravity},
Phys. Lett. B 805, 135422 (2020).

\bibitem{BeltranHeisenbergKoivisto2019} 
J. Beltr\'an Jim\'enez, L. Heisenberg, and T. Koivisto,
{\it The Geometrical Trinity of Gravity},
Universe 5, 173 (2019)

\bibitem{HohmannJarvKrssakPfeifer2019} 
M. Hohmann, L. J{\"a}rv, M. Kr{\v{s}}{\v{s}}{\'a}k, and C. Pfeifer,
{\it Modified teleparallel theories of gravity in symmetric spacetimes},
Phys. Rev. D 100, 084002 (2019).

\bibitem{Kundt1961}
W. Kundt,
{\it The plane-fronted gravitational waves},
Z. Phys. 163, 77--86 (1961).

\bibitem{KundtTrumper2016} 
W. Kundt and M. Tr{\"u}mper,
{\it Republication of: Contributions to the theory of gravitational radiation fields. Exact solutions of the field equations of the general theory of relativity V},
Gen. Rel. Grav. 48, 44 (2016).

\bibitem{Zakharov1973} 
V. D. Zakharov,
{\it Gravitational waves in Einstein's theory},
Israel Program for Scientific Translations (1973).

\bibitem{Puetzfeld2000_Diploma}
D. Puetzfeld,
{\it Exact solutions in metric-affine gauge theory of gravity},
Master's thesis, University of Cologne (2000).

\bibitem{Obukhov2017} 
Y. N. Obukhov,
{\it Gravitational waves in Poincar\'e gauge gravity theory},
Phys. Rev. D 95, 084028 (2017).

\bibitem{BlagojevicCvetkovic2017}
M. Blagojevi\'c and B. Cvetkovi\'c,
{\it Generalized pp waves in Poincar\'e gauge theory},
Phys. Rev. D 95, 104018 (2017).

\bibitem{Obukhov2006} 
Y. N. Obukhov,
{\it Plane waves in metric-affine gravity},
Phys. Rev. D 73, 024025 (2006).

\bibitem{PasicVassiliev2005}
V. Pasic and D. Vassiliev,
{\it PP-waves with torsion and metric-affine gravity},
Class. Quant. Grav. 22, 3961--3976 (2005).

\bibitem{Garcia2000} 
A. Garc\'ia, A. Mac\'ias, D. Puetzfeld, and J. Socorro,
{\it Plane-fronted waves in metric-affine gravity},
Phys. Rev. D 62, 044021 (2000).

\bibitem{Vassiliev2005}
D. Vassiliev,
{\it Quadratic metric-affine gravity},
Annalen der Physik 14, 231--252 (2005).

\bibitem{Vassiliev2002}
D. Vassiliev,
{\it Pseudoinstantons in metric-affine field theory},
Gen. Rel. Grav. 34, 1239--1265 (2002).

\bibitem{McCrea1992}
J. D. McCrea,
{\it Irreducible decompositions of nonmetricity, torsion, curvature and Bianchi identities in metric-affine spacetimes},
Class. Quant. Grav. 9, 553--568 (1992).

\bibitem{Blau2011}
M. Blau,
{\it Lecture notes on general relativity},
Albert Einstein Center for Fundamental Physics Bern, Germany (2011).

\bibitem{BekaertMorand2013}
X. Bekaert and K. Morand,
{\it Embedding nonrelativistic physics inside a gravitational wave},
Phys. Rev. D 88, 063008 (2013).

\bibitem{Poisson2002}
E. Poisson,
{\it An advanced course in general relativity},
Lecture notes at University of Guelph (2002).

\bibitem{Witten1962}
L. Witten,
{\it Gravitation: an introduction to current research},
John Wiley \& Sons (1962).

\bibitem{Podolsky2009}
J. Podolsk{\`y} and M. {\v{Z}}ofka,
{\it General Kundt spacetimes in higher dimensions},
Class. Quant. Grav. 26, 105008 (2009).

\bibitem{BicakLedvinka2014}
J. Bi{\v{c}}{\'a}k and T. Ledvinka,
{\it Relativity and Gravitation: 100 Years After Einstein in Prague}, vol. 157,
Springer (2014).

\bibitem{GriffithsPodolsky2009}
J. B. Griffiths and J. Podolsk{\'y},
{\it Exact space-times in Einstein's general relativity},
Cambridge University Press (2009).

\bibitem{Stephani2003}
H. Stephani, D. Kramer, M. MacCallum, C. Hoenselaers, and E. Herlt,
{\it Exact solutions of Einstein's field equations}, 
Cambridge Univ. Press (2003).

\bibitem{Brinkmann1925}
H. W. Brinkmann,
{\it Einstein spaces which are mapped conformally on each other},
Mathematische Annalen 94, 119--145 (1925).

\bibitem{Blanco2011}
O. F. Blanco, M. S\'anchez, and J. M. M. Senovilla,
{\it Structure of second-order symmetric Lorentzian manifolds},
J. Eur. Math. Soc. 15, 595-634 (2013).

\bibitem{Ortin2004}
T. Ort\'in,
{\it Gravity and strings},
Cambridge Univ. Press (2004).

\end{thebibliography}

%\end{multicols}

\end{document}